\documentclass{article}
\usepackage{hyperref}
\usepackage{amsmath,mathrsfs}
\usepackage{graphicx}%
\usepackage{amsfonts}%
\usepackage{amssymb}
\usepackage{tikz}
\usepackage{color}

\topmargin -0mm
\evensidemargin -0mm
\oddsidemargin -0mm
\textwidth 160mm
\textheight 220mm

\newtheorem{theorem}{Theorem}
\newtheorem{corollary}{Corollary}
\newtheorem{definition}{Definition}
\newtheorem{example}{Example}
\newtheorem{lemma}{Lemma}
\newtheorem{remark}{Remark}
\newenvironment{proof}[1][Proof]{\textbf{#1.} }{\ \rule{0.5em}{0.5em}}

\begin{document}

\title{Saturability of the Quantum Cram\'{e}r-Rao Bound in Multiparameter Quantum Estimation at the Single-Copy Level}
\author{Hendra I. Nurdin\thanks{Email: h.nurdin@unsw.edu.au} \\ School of Electrical Engineering and Telecommunications \\ University of New South Wales }
\date{%January 2024}
}
\maketitle 

\begin{abstract}
The quantum Cram\'{e}r-Rao bound (QCRB) as the ultimate lower bound  for precision in quantum parameter estimation is only known to be saturable in the multiparameter setting in special cases and under conditions such as full or average commutavity of the symmetric logarithmic derivatives (SLDs) associated with the parameters. Moreover, for general mixed states, collective measurements over infinitely many identical copies of the quantum state are generally required to attain the QCRB. In the important and experimentally relevant single-copy scenario, a necessary condition for saturating the QCRB in the multiparameter setting for general mixed states is the so-called partial commutativity condition on the SLDs. However, it is not known if this condition is also  sufficient.  This paper establishes necessary and sufficient conditions for saturability of the multiparameter QCRB in the single-copy setting in terms of the commutativity of a set of projected SLDs and the existence of a unitary solution to a system of nonlinear partial differential equations. New necessary conditions that imply partial commutativity  are also obtained, which together with another condition become sufficient. Moreover, when the sufficient conditions are satisfied an optimal measurement saturating the QCRB can be chosen to be projective and explicitly characterized. An example is developed to illustrate the case of a multiparameter quantum state where the conditions derived herein are satisfied and can be explicitly verified.  \textcolor{blue}{Published as [H. I. Nurdin, IEEE Control Systems Lett., vol.  8, pp. 376 - 381,  2024 (DOI: 10.1109/LCSYS.2024.3382330)]. Version 5 appends a corrigendum that fixes a gap in Condition 2 of Theorem 2 of this paper.}
\end{abstract}

\section{Introduction}
Estimation of unknown parameters of interest from noisy observations that contain information about the parameters is an important problem originating in statistics that is of fundamental importance and have  wide utility in various areas of science and engineering. In systems and control, parameter estimation is central to important topics in the field such as stochastic modeling and system identification \cite{fan2008nonlinear,Ljung99}.

In physical systems, information about the parameters are typically obtaining by performing some measurements on a system and constructing an estimator for the parameters based on the measurement results. In quantum systems, there is an inherent fundamental noise always present, quantum noise, that persists even if all classical noise sources can be completely eliminated. Thus there has been much interest in parameter estimation when the limiting factor is quantum noise and to achieve the ultimate estimation precision physically possible, typically in the mean square sense. Quantum parameter estimation theory originated
in the pioneering works of Helstrom, Holevo and Belavkin in the '60s and '70s and in recent years has attracted more attention as one of theoretical underpinnings for the field of quantum metrology; see \cite{Liu20} for a recent survey. This field aims to exploit quantum effects and quantum devices to perform measurements more accurately for emerging applications such as quantum sensing and imaging, in various physical platforms such as quantum optics, photonics and cold atoms \cite{Liu20,LZCWH22,Barbieri22}. In quantum systems and control, quantum parameter estimation is foundational for quantum system identification; see, e.g., \cite{NG22} and the references therein.

The quantum Cram\'{e}r-Rao bound (QCRB) in the single and multiparameter setting sets the ultimate precision in the mean square sense with which parameters encoded in the quantum state of a quantum system can be estimated using quantum measurements. When there is only a single parameter, there always exists a quantum measurement that saturates this bound. However, in the multiparameter setting, with two or more parameters to be estimated, this is no longer the case. The well-known reason is that the measurements that achieve the ultimate precision for the different parameters may not in general be compatible with one another, requiring the measurement of non-commuting observables. In recent years, multiparameter quantum estimation theory has been gaining increased attention and there have appeared several survey papers in the literature that give an overview of the key results and state-of-the-art in this research area, see, e.g., \cite{SBD16,Liu20,ABGG20,DGG20,HRZ22}. 

From an experimental point-of-view, {\em single-copy} scenarios of multiparameter quantum estimation are of particular importance because they are easier to implement in a laboratory. Single-copy here means that measurement is only performed on a single copy of the quantum state of interest and a parameter estimate is furnished based on this single-copy measurement. In general, saturating the QCRB in the multiparameter scenario for arbitrary parameterized quantum states requires performing collective measurements on infinitely many identical copies of the quantum state \cite{RJD16}. 

In the single-copy case that is of interest in the present work, a well-known result is that the QCRB can be saturated in multiparameter estimation on pure quantum states,  provided that an average commutativity condition on the symmetric logarithmic derivatives (SLDs) associated with the parameters is satisfied; see Section \ref{sec:prelim} for details. For mixed states that are full rank the QCRB can be saturated if and only if the SLDs for the different parameters are mutually commuting \cite[\hbox{ref. [5]  and Appendix B.1}]{SYH20}. For general mixed states that are between these two extremes, the work \cite{Yang19} derived another type of commutativity condition on the SLDs, called the {\em partial commutativity} condition, and show that this condition is {\em necessary} for saturation of the QRCB in the multiparameter and single-copy case. The paper also derives necessary and sufficient conditions for general quantum measurements described by positive operator-valued measures (POVMs) to saturate the QCRB, however it does not establish the existence of such POVMs for a given quantum state. This single-copy result generalizes an analogous result in \cite{Pezze17} for the special case of pure states and projective measurements. An alternative set of necessary and sufficient conditions for a POVM to saturate the QRRB, but expressed in terms of both the POVM and the associated estimator $\hat{\theta}$, is given in \cite[Appendix B.1]{SYH20}. Whether the partial  commutativity condition is sufficient has up to now been unknown. Obtaining necessary and sufficient conditions for saturating the QRCB in terms of the SLDs, if they exist,  is of significance interest in practice. For instance, they can be used to determine if saturation can be achieved with fewer experimental resources.  

This paper establishes necessary and sufficient conditions for saturability of the  multiparameter QCRB in the single-copy setting. Then new necessary conditions that imply partial commutativity are obtained, building on the approach of \cite{Yang19}. They become sufficient with the addition of another condition (Theorem 6). When sufficient conditions for saturability are met, the proof gives an explicit construction of a measurement that saturates the QCRB, which turns out can be chosen to be projective.

The paper is structured as follows. Section \ref{sec:prelim} gives a brief overview of quantum parameter estimation theory, including a statement of the QCRB and definitions of POVMs and SLDs. Section \ref{sec:existing} reviews existing results on saturability of the QCRB in the single-copy case as well as in the multi-copy setting. This section also recalls some of the key results from \cite{Yang19} that are relevant for deriving Theorem \ref{thm:main}.  Section \ref{sec:main} then states and derives the main results of the paper followed by a discussion and an example. Finally, Section \ref{sec:conclu} gives a conclusion for the paper and directions for future work. 

\vspace{0.5cm}
\noindent \textbf{Notation.} $\mathbb{R}^n$ and $\mathbb{C}^n$ denote the set of real and complex vectors, respectively, whose elements are represented as a column vector unless stated otherwise, and $i=\sqrt{-1}$. The conjugate of a complex number $c$, its real part and imaginary part will be denoted by $\overline{c}$, $\Re\{c\}$ and $\Im\{c\}$, respectively. The transpose of a matrix $X$ is denoted by $X^{\top}$ and the adjoint of an operator $X$ on a Hilbert space $\mathcal{H}$ or the conjugate transpose of a complex matrix $X$ is denoted by $X^{\dag}$. A vector in a complex Hilbert space will be denoted by the ket $|x\rangle$ and its conjugate transpose by the bra $\langle x|$. The trace of a square matrix $X$ is denoted  by $\mathrm{tr}(X)$. The direct sum of two vector spaces $V_1$ and $V_2$ is denoted by $V_1 \oplus V_2$. For a vector space $V$ and an operator $X$ on a vector space containing $V$, $\left. X \right|_V$ denotes the restriction of $X$ to $V$. For any Hermitian matrix $X$, $X \geq  0\, (>0)$ denotes that $X$ is positive semidefinite (positive definite) while $A \geq B\, (A > B)$ for any two Hermitian matrices of the same dimension denotes that $A-B \geq 0$ ($A-B>0)$. For two square matrices $X$ and $Y$, $[X,Y]=XY-YX$ and $\{X,Y\}=XY+YX$ are their commutator and anti-commutator, respectively. An $n \times n$ identity matrix will be denoted by $I_n$ or simply by $I$ if its dimension can be inferred from the context. Similarly, $0_{m \times n}$ will denote a zero matrix of dimension $m \times n$ with the subscript dropped if the dimension can be inferred from context. The expectation operator will be denoted by $\mathbb{E}\left[\cdot\right]$ and the expectation of a random variable $X$ by $\mathbb{E}[X]$.
 
\section{Preliminaries}
\label{sec:prelim}
Consider a finite dimensional quantum system with Hilbert space $\mathcal{H}$ that is of a finite dimension $n_s$. Let $\rho_{\theta}$ be a density operator on $\mathcal{H}$ that is parameterized by an unknown parameter vector $\theta =(\theta_1,\ldots,\theta_p)^{\top} \in \Theta$ with $p$ elements, where $\Theta \subseteq \mathbb{R}^p$ is the parameter space as an open set in $\mathbb{R}^p$. It is assumed throughout that $\rho_{\theta}$ depends smoothly on $\theta$. Let the null space of $\rho_{\theta}$ be denoted by $\mathcal{H}_{0,\theta}=\{|\psi \rangle \in \mathcal{H} \mid \rho_{\theta} |\psi\rangle = 0\}$. It is assumed throughout the paper that $\mathcal{H}_{0,\theta}$ has a fixed dimension $r_0$ for all $\theta \in \Theta$ (i.e., $r_0$ is independent of $\theta$) and it has a set of orthonormal basis vectors $\mathcal{B}_{0,\theta}=\{ \phi_{1,\theta},\ldots,\phi_{r_0,\theta}\}$. Let $\mathcal{H}_{+,\theta}$ be the support of $\rho_{\theta}$ defined by $\mathcal{H}_{+,\theta} =\{\rho_{\theta} |\psi \rangle \mid | \psi \rangle \in \mathcal{H}\}$. That is, the support of $\rho_{\theta}$ coincides with its range.   Since $\rho_{\theta}$ is self-adjoint, the range and support of $\rho_{\theta}$ are orthogonal by the null space-range decomposition of linear algebra, therefore we have the direct sum decomposition $\mathcal{H}=\mathcal{H}_{+,\theta} \oplus \mathcal{H}_{0,\theta}$. Since the dimension of $\mathcal{H}_{0,\theta}$ is fixed, so is the dimension of $\mathcal{H}_{+,\theta}$. This dimension is denoted by $r_{+}= n_s - r_0$.  Moreover, we shall take  $\mathcal{B}_{+,\theta}=\{ \psi_{1,\theta},\ldots,\psi_{r_+,\theta}\}$ as an orthonormal basis for $\mathcal{H}_{+,\theta}$. Based on the stated assumptions, we have:
\begin{equation}
\rho_{\theta} = \sum_{k=1}^{r_{+}} q_{k,\theta} |\psi_{k,\theta}\rangle \langle \psi_{k,\theta}|, \label{eq:rho-decomposition}
\end{equation}
for real numbers $q_{k,\theta}>0$ satisfying $\sum_{k=1}^{r_+} q_{k,\theta}=1$. 

Let $P_{+,\theta}$ denote the projection operator onto $\mathcal{H}_{+,\theta}$ and $P_{0,\theta}$ be the projection onto $\mathcal{H}_{0,\theta}$.  Throughout the paper, all operators on $\mathcal{H}$, such as $\rho_{\theta}$, $L_{\theta_j}$, etc, will often be implicitly represented as complex matrices with respect to the full basis $\mathcal{B} = \mathcal{B}_{0,
\theta} \cup \mathcal{B}_{+,
\theta}$, without further comment. Also, it will be useful to express operators $O$ on $\mathcal{H}$ in the block form:
\begin{equation}
O=\left[\begin{array}{cc} O_{++} & O_{+0} \\ O_{0+} & O_{00} \end{array} \right], \label{eq:decom}
\end{equation}
where $O_{jk}=P_{j,\theta} O P_{k,\theta}$ for $j,k \in \{+,0\}$. If $O$ is an observable $O=O^{\dag}$ then $O_{++}^{\dag}=O_{++}$, $O_{0+}=O_{+0}^{\dag}$ and  $O_{00}^{\dag}=O_{00}$. In this representation and block form, note that
\begin{align}
\rho_{\theta} =\left[ \begin{array}{cc} \rho_{\theta,++} & 0 \\ 0 & 0\end{array} \right], \label{eq:rho-rep}
\end{align}
where $\rho_{\theta,++}>0$, and diagonal in the basis $\mathcal{B}_{+,\theta}$. 

The quantum system is prepared in the state $\rho_{\theta}$ and a general POVM measurement with a discrete and finite number of real outcomes is performed on it. The POVM  will be described by the set of operators $\{E_k;k=1,2,\ldots,M\}$ for some integer $M \geq 2$, where the $E_k$'s are non-zero positive semidefinite operators on $\mathcal{H}$ that satisfy $\sum_{k=1}^M E_k = I_{n_s}$. Each element $E_k$ of a POVM will be referred to as a {\em POVM operator} (also referred to as an effect operator in the literature). Each $E_k$ corresponds to a distinct measurement outcome that is indexed by $k$ and takes on a real value $\mu_k \in \mathbb{R}$. The probability of obtaining a measurement result $\mu_k $ is given by $p_{k,\theta} = \mathrm{tr}(\rho_{\theta} E_k)$.  A POVM is said to be {\em projective} if it corresponds to a projective measurement. In this case   all the POVM operators are mutually commuting projection operators, $E_k^2=E_k$ and $[E_k,E_l]=0$ for all $k,l=1,\ldots,M$.

Given a random measurement outcome $\mu_k$ and unknown parameter value $\theta$, an estimator $\hat{\theta}_k$ of $\theta$ as a random variable that is a function of $\mu_k$ can be constructed. The estimator is given by $\hat{\theta}_k=f(\mu_k)$ for some (Borel measurable) function $f: \mathbb{R} \rightarrow \mathbb{R}^p$, and  it is unbiased, meaning that $\mathbb{E}[\hat{\theta}] =\theta$. 
The covariance matrix of the estimator, denoted by $\Sigma$, is a real symmetric $p \times p$ matrix given  by $\Sigma=\mathbb{E}[(\hat{\theta}-\theta)(\hat{\theta}-\theta)^{\top}]$. A central result in quantum estimation theory is the quantum Cr\'{a}mer-Rao bound (QCRB), which states that the covariance matrix of any unbiased estimator of $\theta$ satisfies the matrix inequality:$\Sigma \geq F_{\theta}^{-1}$, where $F_{\theta}$ is a $p \times p$ real symmetric matrix known as the quantum Fisher information matrix with $F_{\theta}=[F_{\theta,jk}]_{j,k=1,\ldots,p}$ and matrix elements given by:
$$
F_{\theta,jk} =  \mathrm{tr}(\rho_{\theta} \{L_{\theta_j},L_{\theta_k}\})/2,
$$
where $L_{\theta_j}$ is an observable on the system Hilbert space (represented by an $n_s\times n_s$ Hermitian matrix) called the symmetric logarithmic derivative (SLD) with respect to the component $\theta_j$ that is defined via the relationship:
$$
\frac{\partial \rho_{\theta}}{\partial \theta_j} = \frac{1}{2}(L_{\theta_j}\rho_{\theta} + \rho_{\theta} L _{\theta_j}).  
$$
Based on the decomposition \eqref{eq:rho-decomposition} for $\rho_{\theta}$ and \eqref{eq:decom} for $L _{\theta_j}$, the equation for the SLD reduces to
\begin{align}
\begin{split}
&\frac{1}{2}(L_{\theta_j,++} \rho_{\theta,++} + \rho_{\theta,++}L_{\theta_j,++})=P_{+,\theta} \frac{\partial \rho_{\theta}}{\partial \theta_j} P_{+,\theta}\\
&\frac{1}{2}\rho_{\theta,++} L_{\theta_j,+0} = P_{+,\theta} \frac{\partial \rho_{\theta}}{\partial \theta_j} P_{0,\theta}\\
&P_{0,\theta} \frac{\partial \rho_{\theta}}{\partial \theta_j} P_{0,\theta} =0.
\end{split} \label{eq:SLD-cond}
\end{align}
Note that the last equation above is implied by \eqref{eq:rho-decomposition}. Also, $L_{\theta_j,00}$ for $j=1,\ldots,p$ are not determined by the above equations and can be specified arbitrarily as long as they are self-adjoint. 

The QCRB is said to be saturated at a parameter value $\theta$ when equality holds, $\Sigma = F_{\theta}^{-1}$.  When the QCRB is saturated then there exists a POVM $\{E_k\}_{k=1,\ldots,M}$ such that the classical Fisher information matrix of the discrete  probability distribution  $\{p_{k,\theta}=\mathrm{tr}(\rho_{\theta} E_k)\}_{k=1,\ldots,M}$, given by
$F_{\theta,\mathrm{c}} = [F_{\theta,\mathrm{c},lm}]_{l,m=1,\ldots,p}$ with
$$
F_{\theta,\mathrm{c},lm} = \mathbb{E}\left[\frac{\partial \ln p_{k,\theta}}{\partial \theta_l} \frac{\partial \ln p_{k,\theta}}{\partial \theta_m} \right]=\sum_{k=1}^M p_{k,\theta} \frac{\partial \ln p_{k,\theta}}{\partial \theta_l} \frac{\partial \ln p_{k,\theta}}{\partial \theta_m} 
$$
equals the quantum Fisher information matrix, $F_{\theta,\mathrm{c}}=F_{\theta}$. 

For any real positive definite matrix $G$, called a cost matrix, one can associate the scalar bound $\mathrm{tr}(G \Sigma) \geq \mathrm{tr}(G F_{\theta}^{-1})$. The left hand side of this scalar bound gives the variance of some linear combination of elements of the estimator $\hat{\theta}$, the linear combination being determined by $G$. When the QCRB is saturated then the scalar cost is saturated for any choice of the cost matrix $G$, $\mathrm{tr}(G \Sigma_{\theta}) = \mathrm{tr}(G F_{\theta}^{-1})$ \cite{Vidrighin14,DGG20,RJD16,Yang19}.  

In the multiparameter setting with $p>1$, the QCRB will not be saturable in general. At the single-copy level, as introduced earlier, only a single copy of a quantum system that is prepared in the state $\rho_{\theta}$ is available. Measurement is performed on this single copy and the measurement outcome is used to compute an estimator for $\theta$. In the multi-copy setting, $K$ identical copies of the system can be used, each copy prepared in the state $\rho_{\theta}$, and two types of measurements can be performed, {\em separable} and {\em collective} measurements. A separable measurement involves only performing measurements on each copy independently (the $K$ copies are not made to interact) and using the independent measurement results to construct an estimator. In  a collective measurement, the $K$ copies are initially coupled through some quantum operation and this is then followed by collective measurements on the $K$ copies, possibly involving the measurement of joint observables on the $K$-copies, which is experimentally challenging; see \cite{RJD16,DGG20,HRZ22}. In general, collective measurements may be required to asymptotically saturate the limit $K\rightarrow \infty$ using collective measurements in general. If the QCRB is known to be saturated at the single-copy level then it will also be saturated on $K$ independent copies using only separable measurements by an additivity property of the quantum Fisher information matrix (see, e.g., \cite[Proposition 2.1]{Liu20}). 
 
\section{Overview of existing results}
\label{sec:existing}
In this work, we are interested in the saturability of the QCRB in multiparameter quantum estimation at the single-copy level. When $\rho_{\theta}$ is full rank then the QCRB is saturable on a single copy if and only if the full commutativity condition $[L_{\theta_j},L_{\theta_k}]=0$ holds for all $j,k$. If $\rho_{\theta}$ is not full rank then the full commutativity is no longer necessary. For pure quantum states  $\rho_{\theta}=|\psi\rangle \langle \psi|$, the QCRB can be saturated on a single copy if and only the average commutativity condition holds \cite{Matsumoto02,Pezze17},
$$
\mathrm{tr}(\rho_{\theta}[L_{\theta_j},L_{\theta_k}]) =0,\,\forall j,k.
$$
For general mixed density operators $\rho_{\theta}$, this condition is no longer sufficient in the single copy case but it remains necessary and sufficient for saturating the QCRB asymptotically in the multi-copy case (as $K \rightarrow \infty$) with collective measurements on the $K$ copies \cite{RJD16}. This makes achieving saturation experimentally challenging. For the single copy scenario, it was shown in \cite{Yang19} that partial commutativity of the SLD operators on the support of $\rho_{\theta}$ is necessary for saturability of the QCRB. Partial commutativity here is in the sense
\begin{equation}
\langle \psi_{m,\theta}|[L_{\theta_j},L_{\theta_k}]| \psi_{n,\theta}\rangle=0,\,\forall j,k=1,\ldots,p,\, m,n = 1,\ldots,r_{+}. \label{eq:partial-commutativity}
\end{equation}
 
To proceed further, the following definition will be required:
\begin{definition}
For a given POVM $\{E_k;k=1,2,\ldots,M\}$, a POVM operator (or element) $E_k$ is said to be {\rm regular} at $\theta$ if $\mathrm{tr}(\rho_{\theta} E_k) >0$, otherwise the POVM operator is said to be a {\em null} operator ($\mathrm{tr}(\rho_{\theta} E_k) =0$).
\end{definition}

The partially commutativity condition \eqref{eq:partial-commutativity} as a necessary condition follows from the following result \cite[Theorems 1 and 2]{Yang19}.
\begin{theorem}
\label{thm:POVM-characterization}
The QCRB is saturated at parameter value $\theta$ by a measurement corresponding to a POVM $\{E_k;k=1,2,\ldots,M\}$ if and only if:
\begin{enumerate}
\item If $E_k$ is a regular POVM operator then
\begin{equation}
E_k L_{\theta_l} |\psi_{n,\theta}\rangle = c^k_l E_k |\psi_{n,\theta}\rangle,\;\forall l=1,\ldots,p,\, n=1,\ldots,r_{+}, \label{eq:regular-cond}
\end{equation}
where $c^k_l$ is a real constant that depends on $k$ and $l$ but not on $n$.

\item If $E_k$ is a null  POVM operator then
\begin{equation}
E_k L_{\theta_l} |\psi_{n,\theta}\rangle = c^k_{lm} E_k L_{\theta_m}|\psi_{n,\theta}\rangle,\;\forall l,m=1,\ldots,p,\, n=1,\ldots,r_{+},\label{eq:null-cond}
\end{equation}
where $c^k_{lm}$ is a real constant that depends on $k$, $l$ and $m$ but not on $n$.

\end{enumerate}

\end{theorem}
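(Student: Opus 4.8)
The plan is to reduce saturation of the QCRB by the POVM to the matrix identity $F_{\theta,\mathrm{c}}=F_\theta$ between the classical and quantum Fisher information matrices, and then to read off the two stated conditions from the equality cases of an explicit two-step bound proving $F_{\theta,\mathrm{c}}\le F_\theta$. The reduction is standard: since the classical Cram\'er--Rao bound $\Sigma\ge F_{\theta,\mathrm{c}}^{-1}$ can be met at $\theta$ by a locally unbiased estimator and one always has $F_{\theta,\mathrm{c}}\le F_\theta$, equality $\Sigma=F_\theta^{-1}$ is attainable precisely when $F_{\theta,\mathrm{c}}=F_\theta$. I would therefore work throughout with this identity as the object to characterize.

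To obtain $F_{\theta,\mathrm{c}}\le F_\theta$ in a form whose equality cases are transparent, fix an arbitrary real vector $v=(v_1,\dots,v_p)^\top$ and set $L=\sum_l v_l L_{\theta_l}$, a Hermitian operator. Using $\partial_{\theta_l} p_{k,\theta}=\Re\{\mathrm{tr}(\rho_\theta L_{\theta_l} E_k)\}$, the quadratic form $v^\top F_{\theta,\mathrm{c}} v$ equals $\sum_{k:\,p_{k,\theta}>0}(\Re\{\mathrm{tr}(\rho_\theta L E_k)\})^2/p_{k,\theta}$, whereas $v^\top F_\theta v=\mathrm{tr}(\rho_\theta L^2)$. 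I would bound the former by the latter in two steps: first replace $\Re\{\cdot\}^2$ by $|\cdot|^2$; then apply Cauchy--Schwarz in the Hilbert--Schmidt inner product to the factorization $\mathrm{tr}(\rho_\theta L E_k)=\langle \rho_\theta^{1/2}\sqrt{E_k},\,\rho_\theta^{1/2} L\sqrt{E_k}\rangle$, which gives $|\mathrm{tr}(\rho_\theta L E_k)|^2\le p_{k,\theta}\,\mathrm{tr}(E_k L\rho_\theta L)$. Summing over the regular operators and using $\sum_k E_k=I$ together with $\mathrm{tr}(E_k L\rho_\theta L)\ge 0$ for the null operators yields $v^\top F_{\theta,\mathrm{c}} v\le \mathrm{tr}(\rho_\theta L^2)=v^\top F_\theta v$. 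As $F_{\theta,\mathrm{c}}\le F_\theta$ as matrices, saturation is equivalent to equality of these quadratic forms for every $v$, hence to equality in each step above for every $v$.

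The core of the argument is then the equality analysis. For a regular $E_k$, tightness of the Cauchy--Schwarz step forces $\rho_\theta^{1/2} L\sqrt{E_k}$ to be a scalar multiple of $\rho_\theta^{1/2}\sqrt{E_k}$; since this is an operator identity the scalar is automatically independent of the index $n$, and decomposing in $v$ gives $\rho_\theta^{1/2} L_{\theta_l}\sqrt{E_k}=c^k_l\,\rho_\theta^{1/2}\sqrt{E_k}$, with $c^k_l$ forced real by tightness of the first (real-part) step. Taking adjoints and using $P_{+,\theta}|\psi_{n,\theta}\rangle=|\psi_{n,\theta}\rangle$ together with $\sqrt{E_k}^2=E_k$, I would convert this into \eqref{eq:regular-cond}. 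For a null $E_k$, the condition $p_{k,\theta}=\mathrm{tr}(\rho_\theta E_k)=0$ first shows that $E_k$ is supported on $\mathcal{H}_{0,\theta}$, after which the residual equality constrains the vectors $E_k L_{\theta_l}|\psi_{n,\theta}\rangle$ so as to yield the proportionality \eqref{eq:null-cond}. The converse direction runs these computations backwards: assuming \eqref{eq:regular-cond} I would verify $\partial_{\theta_l}p_{k,\theta}=c^k_l p_{k,\theta}$ and, after inserting $\sum_k E_k=I$ into $\mathrm{tr}(\rho_\theta L_{\theta_l}L_{\theta_m})$, recover $F_{\theta,\mathrm{c}}=F_\theta$.

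The hard part will be the null-operator case. The Cauchy--Schwarz equality condition pins down the regular operators cleanly, but for the null operators the bound only controls $\mathrm{tr}(E_k L\rho_\theta L)$ and does not by itself deliver the stated proportionality with an $n$-independent real constant; keeping careful track of which quantities are forced to vanish versus merely be proportional, and ensuring reality and $n$-independence of the resulting constants, is the delicate step. This is precisely the locus where the interaction between conditions \eqref{eq:regular-cond} and \eqref{eq:null-cond} must be handled with care, and I would expect it to require the most attention.
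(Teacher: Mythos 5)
First, a point of reference: the paper itself contains no proof of Theorem \ref{thm:POVM-characterization}; it is imported verbatim from \cite[Theorems 1 and 2]{Yang19}, so your proposal must be measured against the proof given there. Your scaffolding (saturation $\Leftrightarrow F_{\theta,\mathrm{c}}=F_{\theta}$, the two-step bound via $(\Re\{z\})^2\le|z|^2$ and Cauchy--Schwarz, then equality analysis) is the standard route and is also how the regular operators are handled in \cite{Yang19}; your derivation of \eqref{eq:regular-cond}, including the reality and $n$-independence of $c^k_l$, is correct.

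The null-operator case, however, is not merely ``delicate'' as you say: as set up, your argument proves a different (strictly stronger) condition there, and the obstruction is conceptual, not bookkeeping. With the classical Fisher information defined as a sum over outcomes with $p_{k,\theta}>0$ at the fixed point $\theta$ --- which is what your reduction uses --- equality in your own chain of inequalities forces, for every null $E_k$ and every $L=\sum_l v_l L_{\theta_l}$, that $\mathrm{tr}(E_k L\rho_{\theta}L)=\|\sqrt{E_k}L\rho_{\theta}^{1/2}\|^2=0$, i.e.\ $E_k L_{\theta_l}|\psi_{n,\theta}\rangle=0$ for all $l,n$. This is strictly stronger than the proportionality \eqref{eq:null-cond}, so under your definitions the theorem is not hard but false: Example \ref{ex:saturable} of the paper has a null projector with $E_{k,00}L_{\theta_l,+0}^{\dag}\neq 0$ satisfying \eqref{eq:null-cond}, which your analysis would classify as non-saturating, contradicting the theorem. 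The missing idea is the limiting treatment of zero-probability outcomes: for a null $E_k$ one has $E_k\rho_{\theta}=0$, hence $\partial_l p_{k,\theta}=0$, so $p_{k,\theta'}$ vanishes to \emph{second} order at $\theta'=\theta$ with Hessian $H_{k,lm}=\tfrac{1}{2}\Re\{\mathrm{tr}(E_k L_{\theta_l}\rho_{\theta}L_{\theta_m})\}$, and its Fisher contribution along a direction $u$ tends to $2(H_k u)(H_k u)^{\top}/(u^{\top}H_k u)$, which is in general direction-dependent. Condition \eqref{eq:null-cond} is exactly what makes $H_k$ real and of rank at most one, hence the limit direction-independent and equal to $2H_k$; the null outcomes then contribute precisely $\Re\{\mathrm{tr}(E_k L\rho_{\theta}L)\}$ to the limiting classical information, and the total reaches $F_{\theta}$ with only the regular terms requiring Cauchy--Schwarz tightness. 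This analysis --- defining saturation through the limit $\theta'\to\theta$ of the classical Fisher information rather than its value at $\theta$ --- is the core of the null-operator half of the proof in \cite{Yang19}, and it is absent from your proposal; the point-wise definition you adopt cannot yield \eqref{eq:null-cond}.
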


\begin{corollary}
\cite[Theorem 3]{Yang19} If the conditions of Theorem \ref{thm:POVM-characterization} are satisfied then the partial commutativity condition \eqref{eq:partial-commutativity} holds.  
\end{corollary}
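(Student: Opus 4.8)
The plan is to expand the target matrix element by inserting the POVM's resolution of the identity between the two symmetric logarithmic derivatives and then to show that the resulting expression is symmetric under the interchange of the two SLD indices, so that the commutator vanishes term by term. Concretely, using $I_{n_s}=\sum_{l=1}^M E_l$ I would write
$$
\langle \psi_{m,\theta}|[L_{\theta_j},L_{\theta_k}]|\psi_{n,\theta}\rangle = \sum_{l=1}^M \Big( \langle \psi_{m,\theta}|L_{\theta_j}E_l L_{\theta_k}|\psi_{n,\theta}\rangle - \langle \psi_{m,\theta}|L_{\theta_k}E_l L_{\theta_j}|\psi_{n,\theta}\rangle \Big),
$$
so it suffices to prove that the single summand $\langle \psi_{m,\theta}|L_{\theta_j}E_l L_{\theta_k}|\psi_{n,\theta}\rangle$ is invariant under $j\leftrightarrow k$ for every POVM operator $E_l$ and all $m,n$. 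By the Definition of regular and null operators, each $E_l$ satisfies either $\mathrm{tr}(\rho_\theta E_l)>0$ or $\mathrm{tr}(\rho_\theta E_l)=0$, so I would treat the two cases separately, invoking \eqref{eq:regular-cond} and \eqref{eq:null-cond} respectively.

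For a \emph{regular} operator $E_l$, I would use \eqref{eq:regular-cond} on the right factor, $E_l L_{\theta_k}|\psi_{n,\theta}\rangle = c^l_k E_l|\psi_{n,\theta}\rangle$, to pull out the real scalar $c^l_k$. For the left factor, Hermiticity of $E_l$ and $L_{\theta_j}$ together with the reality of the constants gives $\langle \psi_{m,\theta}|L_{\theta_j}E_l = (E_l L_{\theta_j}|\psi_{m,\theta}\rangle)^{\dag} = c^l_j\,\langle \psi_{m,\theta}|E_l$, again by \eqref{eq:regular-cond} (now at the support vector $\psi_{m,\theta}$). Hence the summand collapses to $c^l_j c^l_k\,\langle \psi_{m,\theta}|E_l|\psi_{n,\theta}\rangle$, which is manifestly symmetric in $j$ and $k$ and therefore cancels in the commutator.

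For a \emph{null} operator $E_l$, the key observation is that \eqref{eq:null-cond} forces all the vectors $\{E_l L_{\theta_a}|\psi_{n,\theta}\rangle\}_{a}$ to be real-scalar multiples of a single reference vector, say $E_l L_{\theta_1}|\psi_{n,\theta}\rangle$. Using this collinearity on the right factor, $E_l L_{\theta_k}|\psi_{n,\theta}\rangle = c^l_{k1}E_l L_{\theta_1}|\psi_{n,\theta}\rangle$, and, exactly as above, moving $L_{\theta_j}E_l$ onto the bra by Hermiticity to get $\langle \psi_{m,\theta}|L_{\theta_j}E_l = c^l_{j1}\langle \psi_{m,\theta}|L_{\theta_1}E_l$, the summand reduces to $c^l_{j1}c^l_{k1}\,\langle \psi_{m,\theta}|L_{\theta_1}E_l L_{\theta_1}|\psi_{n,\theta}\rangle$, which is once more symmetric under $j\leftrightarrow k$. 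Summing the symmetric contributions from all regular and null operators then yields $\langle \psi_{m,\theta}|[L_{\theta_j},L_{\theta_k}]|\psi_{n,\theta}\rangle=0$, i.e. the partial commutativity condition \eqref{eq:partial-commutativity}.

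I expect the main obstacle to be the null case: one must correctly exploit the Hermiticity of $E_l$ and $L_{\theta_a}$ and the reality of the coefficients to transfer a scalar from the ket onto the bra, and recognize the common-vector structure implied by \eqref{eq:null-cond} that makes the two orderings coincide. The regular case is the more routine of the two, since the constants there factor out immediately; the only care needed is to apply \eqref{eq:regular-cond} at both support vectors $\psi_{n,\theta}$ and $\psi_{m,\theta}$.
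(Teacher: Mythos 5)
Your proof is correct. The paper gives no inline proof of this corollary---it defers entirely to the citation of Theorem 3 in \cite{Yang19}---and your argument (inserting the POVM completeness relation $\sum_{l} E_l = I$ between the two SLDs, then using \eqref{eq:regular-cond} for regular operators and \eqref{eq:null-cond} for null operators, together with Hermiticity of $E_l$ and the SLDs and the reality of the constants, to show each summand $\langle \psi_{m,\theta}|L_{\theta_j}E_l L_{\theta_k}|\psi_{n,\theta}\rangle$ is symmetric under $j\leftrightarrow k$) is exactly the standard argument of that reference, so you have in effect supplied the proof the paper delegates.
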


\section{Main results and discussion}
\label{sec:main}
In this section, the main theorem of the paper will be stated and proven. When sufficient conditions of the theorem are satisfied, the QCRB is saturated by a projective measurement that can be explicitly characterized. An example is also developed to illustrate the application of the theorem  to a non-full rank quantum state with two parameters.  

To get to the core arguments and methodology with minimal technicalities, the focus is on the finite-dimensional setting. However, it is reasonable to expect that the results will continue to hold, perhaps with the addition of some technical caveats, to infinite-dimensional separable Hilbert spaces when all operators have discrete countable spectra and eigenvectors (e.g., \cite{Yang19} allows an infinite-dimensional Hilbert space with a countable basis). It is also reasonable to expect that the results can be extended to continuous-variable quantum systems such as Gaussian quantum systems. We begin with the following lemma.
\begin{lemma}
\label{lem:regular-null-decom}
$E_k $ is a regular POVM operator if and only if $E_{k,++} \geq 0$ and $E_{k,++}\neq 0$. On the other hand, $E_k$ is a null POVM operator if and only if 
$$
E_k =\left[ \begin{array}{cc} 0 & 0 \\ 0 & E_{k,00} \end{array} \right],
$$
where $E_{k,00} \geq 0$. 
\end{lemma}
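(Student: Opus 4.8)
The plan is to work entirely in the block decomposition \eqref{eq:decom} of $E_k$ relative to the orthogonal splitting $\mathcal{H} = \mathcal{H}_{+,\theta} \oplus \mathcal{H}_{0,\theta}$, exploiting the fact that $\rho_\theta$ is supported only on the $(+,+)$ block as in \eqref{eq:rho-rep}. The first fact I would record is that, since $E_k \geq 0$, every diagonal sub-block is positive semidefinite on its own subspace: for any $|v\rangle \in \mathcal{H}_{+,\theta}$ one has $\langle v | E_{k,++} | v\rangle = \langle v | E_k | v \rangle \geq 0$ (as $P_{+,\theta}|v\rangle = |v\rangle$), and similarly $E_{k,00} \geq 0$. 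Thus $E_{k,++}\geq 0$ holds unconditionally, and only whether it vanishes is at issue.

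Next I would compute the measurement probability in block form. Multiplying \eqref{eq:rho-rep} by the block form of $E_k$ and tracing, the vanishing of the $(0,0)$ block of $\rho_\theta$ leaves
$$
\mathrm{tr}(\rho_\theta E_k) = \mathrm{tr}(\rho_{\theta,++} E_{k,++}).
$$
Because $\rho_{\theta,++} > 0$ I would factor $\rho_{\theta,++} = \rho_{\theta,++}^{1/2}\rho_{\theta,++}^{1/2}$ with $\rho_{\theta,++}^{1/2}$ invertible and rewrite this as $\mathrm{tr}(\rho_{\theta,++}^{1/2} E_{k,++} \rho_{\theta,++}^{1/2})$, the trace of a positive semidefinite matrix. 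Such a trace is zero if and only if the matrix itself is zero, and by invertibility of $\rho_{\theta,++}^{1/2}$ this occurs exactly when $E_{k,++}=0$. This settles the regular characterization at once: $\mathrm{tr}(\rho_\theta E_k) > 0$ iff $E_{k,++}\neq 0$ (with $E_{k,++}\geq 0$ automatic), while $\mathrm{tr}(\rho_\theta E_k) = 0$ iff $E_{k,++}=0$.

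It then remains to show that for a null operator the off-diagonal blocks also vanish, so that $E_k$ collapses onto its $(0,0)$ block. This is where I expect the only genuine work. I would invoke positivity of $E_k$ directly, testing the quadratic form on vectors $|w\rangle = s|u\rangle + |v\rangle$ with $|u\rangle \in \mathcal{H}_{+,\theta}$, $|v\rangle \in \mathcal{H}_{0,\theta}$ and $s\in\mathbb{R}$. Using $E_{k,++}=0$ and $E_{k,0+}=E_{k,+0}^\dag$, this reduces to
$$
2s\,\Re\langle u | E_{k,+0} | v\rangle + \langle v | E_{k,00} | v\rangle \geq 0 \quad \text{for all } s\in\mathbb{R}.
$$
Letting $s \to \pm\infty$ forces $\Re\langle u | E_{k,+0} | v\rangle = 0$, and repeating with $|u\rangle$ replaced by $i|u\rangle$ kills the imaginary part, so $\langle u|E_{k,+0}|v\rangle=0$ for all such $|u\rangle,|v\rangle$; hence $E_{k,+0}=0$ and $E_{k,0+}=E_{k,+0}^\dag=0$. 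Together with $E_{k,00}\geq 0$ from the first step, this yields the stated block form. The converse is immediate: for $E_k$ of that form, $\mathrm{tr}(\rho_\theta E_k)=\mathrm{tr}(\rho_{\theta,++}\cdot 0)=0$, so $E_k$ is null.
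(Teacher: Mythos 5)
Your proof is correct and takes essentially the same approach as the paper's: reduce $\mathrm{tr}(\rho_{\theta}E_k)$ to $\mathrm{tr}(\rho_{\theta,++}E_{k,++})$ using the block form \eqref{eq:rho-rep}, characterize regularity by $E_{k,++}\neq 0$ (with $E_{k,++}\geq 0$ automatic from $E_k \geq 0$), and then use positive semidefiniteness of $E_k$ via the quadratic form $2\Re\{x^{\dag}E_{k,+0}y\} + y^{\dag}E_{k,00}y \geq 0$ to force $E_{k,+0}=0$ in the null case. The only differences are cosmetic: you justify the trace positivity by the square-root factorization where the paper invokes diagonality of $\rho_{\theta,++}$, and you spell out the scaling argument ($s\to\pm\infty$ and the $i|u\rangle$ substitution) that the paper leaves implicit.
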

\begin{proof}
By using the representation \eqref{eq:rho-rep} for $\rho_{\theta}$, we have that
$
\mathrm{tr}(\rho_{\theta} E_k) =\mathrm{tr}(E_{k,++} \rho_{\theta,++}).
$
Since $\rho_{\theta,++}>0$ and diagonal in the basis $\mathcal{B}_{+,\theta}$, this quantity is positive and $E_k$ is regular if and only if $E_{k,++} \geq 0$ and $E_{k,++} \neq  0$. No other conditions are imposed on $E_{k,+0}$, $E_{k,0+}=E_{k,+0}^{\dag}$ and $E_{k,00}$ other than that they are such that $E_k \geq 0$. 

On the other hand, for a null POVM operator $\mathrm{tr}(\rho_{\theta} E_k)=0$ only if $E_{k,++}=0$. However, this is not sufficient since $E_k$ must also be positive semidefinite. Let $z$ be an $n_s$-dimensional complex vector, $z=[\begin{array}{cc} x^{\top} & y^{\top}\end{array}]$ with $x \in \mathbb{C}^{r_+}$ and $y \in\mathbb{C}^{r_0}$. Since $E_{k,++}=0$, it follows that $z^{\dag} E_k z =2\Re\{x^{\dag}E_{k,+0} y\} + y^{\dag} E_{k,00} y$ and therefore $z^{\dag} E_k z \geq 0$ for all $z$ if and only if $E_{k,+0}=0$. 
This proves the necessary and sufficient conditions for $E_k$ to be null. 
\end{proof}

Note that the conditions in Theorem \ref{thm:POVM-characterization} are actually statements about subspaces since conditions \eqref{eq:regular-cond} and \eqref{eq:null-cond}
hold independently of the index $n$. Indeed, it is immediately verified that  \eqref{eq:regular-cond} and \eqref{eq:null-cond} continue to hold when $|\psi_{n,\theta}\rangle$ is replaced by any $|\psi\rangle = \sum_{n=1}^{r_{+}} \lambda_k |\psi_{n,\theta}\rangle \in \mathcal{H}_{+,\theta}$ for any complex constants $\lambda_1,\ldots,\lambda_{r+}$.  
From these observations, the following statement can be extracted. 
\begin{lemma}
\label{lem:optimal-POVMs} The conditions of Theorem \ref{thm:POVM-characterization} can be stated equivalently as follows:
\begin{enumerate}
\item For a regular $E_k$,  \eqref{eq:regular-cond}  is equivalent to
\begin{equation}
E_k L_{\theta_l} P_{+,\theta} = c^k_l E_k P_{+,\theta}\;\forall l=1,\ldots,p. \label{eq:regular-cond-gen}
\end{equation}

\item For a null  $E_k$,  \eqref{eq:null-cond}  is equivalent to
\begin{equation}
E_k L_{\theta_l} P_{+,\theta} = c^k_{lm} E_k L_{\theta_m} P_{+,\theta} \;\forall l,m=1,\ldots,p. \label{eq:null-cond-gen}
\end{equation}

\end{enumerate}
\end{lemma}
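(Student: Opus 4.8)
The plan is to exploit the spectral representation of the projection onto the support, $P_{+,\theta} = \sum_{n=1}^{r_+} |\psi_{n,\theta}\rangle\langle\psi_{n,\theta}|$, which holds precisely because $\mathcal{B}_{+,\theta}$ is an orthonormal basis for $\mathcal{H}_{+,\theta}$. Both claimed equivalences then have the same abstract form: an identity $A|\psi_{n,\theta}\rangle = B|\psi_{n,\theta}\rangle$ holding for every basis index $n = 1,\ldots,r_+$ is equivalent to the single operator identity $A P_{+,\theta} = B P_{+,\theta}$. For Condition 1 I would take $A = E_k L_{\theta_l}$ and $B = c^k_l E_k$; for Condition 2, $A = E_k L_{\theta_l}$ and $B = c^k_{lm} E_k L_{\theta_m}$. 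Proving this one equivalence therefore dispatches both parts simultaneously, and neither the positivity nor the self-adjointness of $E_k$ is needed, so the same argument covers regular and null operators uniformly.

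For the forward implication (per-vector $\Rightarrow$ projected), I would right-multiply by the spectral form of $P_{+,\theta}$: since $A P_{+,\theta} = \sum_n (A|\psi_{n,\theta}\rangle)\langle\psi_{n,\theta}|$, substituting the hypothesis $A|\psi_{n,\theta}\rangle = B|\psi_{n,\theta}\rangle$ term by term and resumming yields $\sum_n (B|\psi_{n,\theta}\rangle)\langle\psi_{n,\theta}| = B P_{+,\theta}$. The constant $c^k_l$ (resp. $c^k_{lm}$) factors out of the sum exactly because Theorem \ref{thm:POVM-characterization} guarantees it is independent of $n$; this $n$-independence is what makes $B$ a single well-defined operator rather than an $n$-dependent family.

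For the reverse implication (projected $\Rightarrow$ per-vector), I would apply the operator identity $A P_{+,\theta} = B P_{+,\theta}$ to each basis vector $|\psi_{n,\theta}\rangle$ and use $P_{+,\theta}|\psi_{n,\theta}\rangle = |\psi_{n,\theta}\rangle$, which holds since $|\psi_{n,\theta}\rangle \in \mathcal{H}_{+,\theta}$. This immediately recovers $A|\psi_{n,\theta}\rangle = B|\psi_{n,\theta}\rangle$ for every $n$, i.e., \eqref{eq:regular-cond} or \eqref{eq:null-cond}, with the same fixed constant inherited directly from the projected identity.

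I do not expect any substantive obstacle: the entire content is the linearity observation stated immediately before the lemma, namely that conditions \eqref{eq:regular-cond} and \eqref{eq:null-cond} are really statements about the subspace $\mathcal{H}_{+,\theta}$ and not about any particular choice of basis. The single point deserving a moment of care is the direction in which the $n$-independence of the constants is used: it is invoked to pull the scalar out of the sum in the forward direction, whereas in the reverse direction the constant is simply carried along unchanged.
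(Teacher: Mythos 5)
Your proposal is correct and is essentially the paper's own argument made explicit: the paper justifies Lemma \ref{lem:optimal-POVMs} by the remark preceding it (conditions \eqref{eq:regular-cond} and \eqref{eq:null-cond} extend by linearity to all of $\mathcal{H}_{+,\theta}$ because the constants are independent of $n$), and your spectral decomposition $P_{+,\theta}=\sum_n |\psi_{n,\theta}\rangle\langle\psi_{n,\theta}|$ together with the reverse step $P_{+,\theta}|\psi_{n,\theta}\rangle=|\psi_{n,\theta}\rangle$ is just a clean formalization of that same linearity observation, correctly flagging where $n$-independence is used.
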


The main result of this paper is Theorem \ref{thm:main} below \textcolor{blue}{[Note added: A corrigendum \cite{Nurd24c} has been included in Appendix \ref{app:corrigendum} that fixes a gap in Condition 2 of Theorem \ref{thm:main}. The correction is presented as Theorem \ref{thm:main-corrected}]}.  

\begin{theorem}
\label{thm:main}
Consider the following four conditions:
\begin{enumerate}
\item[1)] $[L_{\theta_l,++},L_{\theta_m,++}] =0$ for all $l,m=1,\ldots,p$.
\item[2)] For each $\theta$ there exists a unitary $U_{\theta} \in \mathbb{C}^{r_+ \times r_+}$ such that $U_{\theta}^{\dag}(\partial_l U_{\theta} - U_{\theta} V_{\theta}^{\dag} \partial_l V_{\theta})\rho_{\theta,++} + \rho_{\theta,++} (\partial_l U_{\theta} - U_{\theta} V_{\theta}^{\dag} \partial_l V_{\theta})^{\dag} U_{\theta}=0$ for  $l=1,\ldots,p$, where $\partial_l = \partial/\partial \theta_l$, $V_{\theta} =[\begin{array}{ccc} |\psi_{1,\theta} \rangle & \ldots & |\psi_{r_+,\theta} \rangle \end{array}]$, and $\rho_{\theta,++}$ is represented in the basis $\mathcal{B}_{+,\theta}$. 
\item[3)] $L_{\theta_l,+0} L_{\theta_m,+0}^{\dag}-L_{\theta_m,+0} L_{\theta_l,+0}^{\dag}=0$ for all $l,m=1,\ldots,p$.
\item[4)] There exists an $r_0 \times r_0$ complex  unitary matrix $W$ such that  all corresponding columns of  $L_{\theta_l,+0}W$ and $L_{\theta_m,+0}W$ with $l,m=1,\ldots,p$ are real scalar multiples of one another or the corresponding columns are simultaneously vanishing. That is, the $s$-th column of $L_{\theta_l,+0}W$ is either $\lambda_{lms} \in \mathbb{R}$ times the $s$-th column of $L_{\theta_m,+0}W$, or both columns are zero (i.e., all their elements are zero),  for all columns $s$ and $\forall l,m$.
\end{enumerate}
Then the following statements hold for saturability of the multiparameter QCRB in the single-copy setting: 
\begin{enumerate}
\item[i)] Conditions 1 and 2 are necessary and sufficient.
\item[ii)] Condition 3 is necessary and together with Condition 1 imply partial commutativity. 
\item[iii)] Conditions 1 and 4 are sufficient.
\end{enumerate}

When the sufficient conditions (Conditions 1 and 4) are satisfied, there exists an optimal projective measurement given by the POVM:
\begin{align*}
&\underbrace{\biggl\{ \left[\begin{array}{cc} \Pi_{\theta,1} & 0 \\ 0 & 0 \end{array} \right],\ldots, \left[\begin{array}{cc} \Pi_{\theta,\chi_{\theta}} & 0 \\ 0 & 0 \end{array} \right]\biggr\}}_{\hbox{Regular POVM operators}}  \bigcup \underbrace{\biggl\{ \left[\begin{array}{cc} 0 & 0 \\ 0 & W D_{00,1} W^{\dag} \end{array} \right],\ldots,\left[\begin{array}{cc} 0 & 0 \\ 0 & W D_{00,r_{0}} W^{\dag} \end{array} \right]\biggr\}}_{\hbox{Null POVM operators}},
\end{align*}
where $\Pi_{\theta_j}$ for $j=1,\ldots,\chi_{\theta}$ ($\chi_{\theta} \leq r_{+}$) are the (common) projection operators in the spectral decomposition of $L_{\theta_l,++}$ for $l=1,\ldots,p$, and $D_{00,j}$ is the $r_0\times r_0$ projection operator which is zero everywhere except for a 1 in row $j$ and column $j$ for $j=1,\ldots,r_0$.
\end{theorem}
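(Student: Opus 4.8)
The plan is to prove the final statement directly by exhibiting the stated POVM and verifying, via Theorem \ref{thm:POVM-characterization} in its equivalent form Lemma \ref{lem:optimal-POVMs}, that it saturates the QCRB; this simultaneously establishes the sufficiency asserted in part (iii). Condition 1 says the Hermitian matrices $L_{\theta_l,++}$ all commute, so they are simultaneously diagonalizable and share a common set of orthogonal spectral projections $\Pi_{\theta,1},\ldots,\Pi_{\theta,\chi_\theta}$ onto their joint eigenspaces, with $L_{\theta_l,++}\Pi_{\theta,j}=\lambda_{l,j}\Pi_{\theta,j}$ for real eigenvalues $\lambda_{l,j}$ and $\sum_{j}\Pi_{\theta,j}=I_{r_+}$. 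Condition 4 supplies the unitary $W$ with the stated column-proportionality. These two ingredients furnish all the building blocks of the proposed POVM.

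First I would check that the proposed set is a valid projective POVM. Using $\sum_j \Pi_{\theta,j}=I_{r_+}$ and the block form \eqref{eq:rho-rep}, the regular operators sum to $P_{+,\theta}$; using $\sum_{s=1}^{r_0}D_{00,s}=I_{r_0}$ and $WW^\dag=I_{r_0}$, the null operators sum to $P_{0,\theta}$, so the total is $I_{n_s}$. Each operator is manifestly positive semidefinite and idempotent (since $\Pi_{\theta,j}^2=\Pi_{\theta,j}$ and $(WD_{00,s}W^\dag)^2=WD_{00,s}W^\dag$), and mutual commutativity follows because the spectral projections are mutually orthogonal, the $D_{00,s}$ are mutually orthogonal, and the regular and null operators have disjoint block supports; Lemma \ref{lem:regular-null-decom} confirms the labelling of which operators are regular and which are null.

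The core of the argument is verifying the saturation conditions of Lemma \ref{lem:optimal-POVMs}. Writing $L_{\theta_l}P_{+,\theta}$ in block form leaves only its first block column, involving $L_{\theta_l,++}$ and $L_{\theta_l,0+}=L_{\theta_l,+0}^\dag$; in particular the free blocks $L_{\theta_l,00}$ play no role. For a regular operator $E_k$ built from $\Pi_{\theta,j}$, multiplying gives $E_k L_{\theta_l}P_{+,\theta}=\lambda_{l,j}E_k=\lambda_{l,j}E_k P_{+,\theta}$ using $\Pi_{\theta,j}L_{\theta_l,++}=\lambda_{l,j}\Pi_{\theta,j}$, so \eqref{eq:regular-cond-gen} holds with the real constant $c^k_l=\lambda_{l,j}$. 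For a null operator $E_k$ built from $WD_{00,s}W^\dag$, the product isolates the $s$-th row of $(L_{\theta_l,+0}W)^\dag$, i.e.\ the conjugate of the $s$-th column of $L_{\theta_l,+0}W$; Condition 4 states precisely that this column is a real multiple $\lambda_{lms}$ of the $s$-th column of $L_{\theta_m,+0}W$ (or both vanish), which yields $E_k L_{\theta_l}P_{+,\theta}=\lambda_{lms}E_k L_{\theta_m}P_{+,\theta}$, establishing \eqref{eq:null-cond-gen} with the real constant $c^k_{lm}=\lambda_{lms}$.

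The main obstacle is the null-operator verification: one must track how a purely $00$-supported operator interacts with $L_{\theta_l}P_{+,\theta}$ to surface the cross-block $L_{\theta_l,0+}$, and then recognize that the column-proportionality of Condition 4 (through the Hermitian-conjugate relation $L_{\theta_l,0+}=L_{\theta_l,+0}^\dag$ and the reality of the proportionality constants) is exactly what is needed to produce constants $c^k_{lm}$ that are real and independent of the column index playing the role of $n$. The regular part and the POVM/projectivity checks are routine block-matrix manipulations once Conditions 1 and 4 are in hand.
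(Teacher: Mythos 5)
Your verification of part (iii) and of the explicit projective POVM is correct, and it follows essentially the same route as the paper: the paper likewise builds the regular operators from the common spectral projections $\Pi_{\theta,j}$ of the commuting blocks $L_{\theta_l,++}$ (verifying \eqref{eq:regular-cond-gen} through its reduced form \eqref{eq:reg-++-cond}) and the null operators as $W D_{00,s} W^{\dag}$ (verifying \eqref{eq:null-cond-gen} through \eqref{eq:null-cond-red}), with the same identifications $c^k_l = \lambda_{l,j}$ and $c^k_{lm} = \lambda_{lms}$, and the same use of $L_{\theta_l,0+} = L_{\theta_l,+0}^{\dag}$ together with the reality of the proportionality constants.

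However, there is a genuine gap: the theorem makes three assertions and your proposal proves only one of them. Part (i) --- that Conditions 1 and 2 are jointly \emph{necessary and sufficient} --- is the core of the paper's proof and is entirely absent from your argument; it cannot be obtained by exhibiting a POVM. The paper's argument observes via Lemma \ref{lem:regular-null-decom} that the blocks $\{E_{k,++}\}$ of regular operators form a POVM on the support $\mathcal{H}_{+,\theta}$, where $\rho_{\theta,++}>0$, and invokes the known full-rank result that saturation there holds if and only if the SLDs of the restricted model mutually commute. The subtlety is that the blocks $L_{\theta_l,++}$ defined by \eqref{eq:SLD-cond} are SLDs for $P_{+,\theta}\partial_l \rho_{\theta} P_{+,\theta}$, which differs from $\partial_l \rho_{\theta,++}$ by the rotation terms $V_{\theta}^{\dag}\partial_l V_{\theta}\,\rho_{\theta,++} + \rho_{\theta,++}\,\partial_l V_{\theta}^{\dag} V_{\theta}$ in \eqref{eq:project-partial-rho}; absorbing this discrepancy into a $\theta$-dependent change of basis $U_{\theta}$ is exactly what generates the PDE system \eqref{eq:U-PDE} of Condition 2. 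Part (ii) --- necessity of Condition 3 and the fact that Conditions 1 and 3 imply partial commutativity --- is also missing; in the paper it follows from restating partial commutativity as $P_{+,\theta}[L_{\theta_l},L_{\theta_m}]P_{+,\theta}=0$, which in block form is $[L_{\theta_l,++},L_{\theta_m,++}]+(L_{\theta_l,+0}L_{\theta_m,+0}^{\dag}-L_{\theta_m,+0}L_{\theta_l,+0}^{\dag})=0$ (eq.\ \eqref{eq:commutativity-block}), combined with the necessity of Condition 1 established in part (i). Since neither argument is a consequence of your construction, your proposal establishes the sufficiency claim (iii) and the POVM characterization, but not the full theorem; note also that the paper's corrigendum shows the original Condition 2 itself needed repair, underscoring that part (i) demands a separate and more delicate analysis than the constructive verification you carried out.
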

\begin{remark}
\label{rem:cond-redundancy} Note that it can be verified by direct calculation that Condition 4 of the theorem on the existence of $W$ and on the columns of $L_{\theta_l,+0}W$ and $L_{\theta_m,+0}W$ implies the necessary Condition 3 of the theorem. Therefore, when Conditions 1 and 4 are satisfied, Condition 3 is redundant.
\end{remark}

\begin{proof}[Proof of Theorem \ref{thm:main}]
We begin by noting that the partial commutativity condition \eqref{eq:partial-commutativity} can be stated as
$$
P_{+,\theta} [L_{\theta_l},L_{\theta_m}]P_{+,\theta} =0\quad \forall l,m =1,\ldots,p.
$$
Using the four block decomposition \eqref{eq:decom} of $L_{\theta_j}$ as an observable, the identity above is equivalent to:
\begin{equation}
[L_{\theta_l,++},L_{\theta_m,++}]+(L_{\theta_l,+0}L_{\theta_m,+0}^{\dag} - L_{\theta_m,+0}L_{\theta_l,+0}^{\dag})=0. \label{eq:commutativity-block}
\end{equation}
The remaining steps of the proof are as follows. It will first be shown that Conditions 1 and 2 of the theorem are necessary and sufficient. By \eqref{eq:commutativity-block} this then implies that Condition 3  is also necessary. It will then be shown that Conditions 1 and 4 (Condition 3 is then implied, Remark \ref{rem:cond-redundancy}) are sufficient for the existence of a projective POVM such that \eqref{eq:regular-cond-gen} and \eqref{eq:null-cond-gen} hold, by explicitly constructing the POVM.

First recall from \S \ref{sec:prelim} that $\rho_{\theta,++}$ is diagonal in the basis $\mathcal{B}_{+,\theta}$ and from \eqref{eq:rho-decomposition} is given in this basis as $\rho_{\theta,++}=\mathrm{diag}(q_{1,\theta},\ldots,q_{r_+,\theta})$, where $ \mathrm{diag}(a_1,\ldots,a_{n})$ denotes a diagonal matrix with $a_{j}$ in row and column $j$. Introduce the shorthand notation $\partial_l x =\partial x/\partial \theta_l$ and $| \partial_l \psi_{k,\theta}\rangle = \partial_l |\psi_{k,\theta}\rangle$. Differentiating,  
$$
\partial_l \rho_{\theta,++}=\mathrm{diag}\left(\partial_l q_{1,\theta},\ldots,\partial_l q_{r_+,\theta}\right).
$$
Let $V_{\theta} =[\begin{array}{ccc} |\psi_{1,\theta} \rangle & \ldots & |\psi_{r_+,\theta} \rangle \end{array}]$. The right hand side (RHS) of the first equation in \eqref{eq:SLD-cond} can be expressed as, 
\begin{align}
\left. P_{+,\theta} \partial_l \rho_{\theta} P_{+,\theta} \right|_{\mathcal{H}_{+,\theta}} &=  \left. V_{\theta} V_{\theta}^{\dag} \partial_l (V_{\theta} \rho_{\theta,++} V_{\theta}^{\dag}) V_{\theta} V_{\theta}^{\dag} \right|_{\mathcal{H}_{+,\theta}}  \notag \\
&= \left. V_{\theta}(V_{\theta}^{\dag} \partial_l V_{\theta} \rho_{\theta,++}  + \partial_l \rho_{\theta,++} +\rho_{\theta,++} \partial_l  V_{\theta}^{\dagger} V_{\theta})V_{\theta}^{\dag} \right|_{\mathcal{H}_{+,\theta}} \notag\\
&= V_{\theta}^{\dag} \partial_l V_{\theta} \rho_{\theta,++}  + \partial_l \rho_{\theta,++} +\rho_{\theta,++} \partial_l V_{\theta}^{\dag} V_{\theta},
\label{eq:project-partial-rho}
\end{align}
where the last line gives the matrix representation of $\left. P_{+,\theta} \partial_l \rho_{\theta} P_{+,\theta} \right|_{\mathcal{H}_{+,\theta}}$ with respect to the basis $\mathcal{B}_{+,\theta}$.

For regular POVM operators $\{E_k\}_{k=1,\ldots,K}$ let $p_{k,\theta} = \mathrm{tr}(E_k \rho_{\theta})=\mathrm{tr}(E_{k,++} \rho_{\theta,++})$. Then $\{E_{k,++}\}_{k=1,\ldots,K}$ may be viewed as a POVM on $\mathcal{H}_{+,\theta}$ since they satisfy $\sum_{k=1}^K E_{k,++} = I_{r_+}$ and the probability $p_{k,\theta}$ only depends on $E_{k,++}$. Since $\rho_{\theta,++}>0$, a well-known result is that for a density operator $\rho'_{\theta}>0$ there is a  POVM that saturates the QCRB if and only if the SLDs $\{L'_{\theta_j}\}_{j=1,\ldots,p}$  satisfying $(L'_{\theta_j} \rho'_{\theta} + \rho'_{\theta}  L'_{\theta_j})/2 =\partial_j \rho'_{\theta}$  for $j=1,\ldots,p$ are mutually commuting, $[L'_{\theta_l},L'_{\theta_m}]=0$ for $l,m=1,\ldots,p$ \cite[\hbox{ref. [5]  and Appendix B.1}]{SYH20}. The RHS of the first equation of \eqref{eq:SLD-cond} does not give  $\partial_j \rho_{\theta,++}$ but rather the RHS of \eqref{eq:project-partial-rho}. However, the representation basis in $\mathcal{H}_{+,\theta}$ can be freely changed from $\mathcal{B}_{+,\theta}$ to a new basis $\mathcal{C}_{+,\theta}=\{|\widetilde{\psi}_{1,\theta}\rangle,\ldots,|\widetilde{\psi}_{r_+,\theta}\rangle \}$ defined by $[\begin{array}{ccc} |\widetilde{\psi}_{1,\theta}\rangle & \ldots & |\widetilde{\psi}_{r_+,\theta}\rangle \end{array}] = V_{\theta} U^{\dag}_{\theta}$, for some unitary $U_{\theta}\in \mathbb{C}^{r_+ \times r_+}$. Let $\varrho_{\theta,++}=U_{\theta} \rho_{\theta,++}U_{\theta}^{\dag}$ and  $\widetilde{L}_{\theta_l,++} = U_{\theta} L_{\theta_l,++} U_{\theta}^{\dag}$, using \eqref{eq:project-partial-rho} we obtain from \eqref{eq:SLD-cond},
\begin{align*}
(\widetilde{L}_{\theta_l,++} \varrho_{\theta,++} + \varrho_{\theta,++}  \widetilde{L}_{\theta_l,++})/2 
&  = U_{\theta} V_{\theta}^{\dag} \partial_l V_{\theta} \rho_{\theta,++} U_{\theta}^{\dag} + U_{\theta}\partial_l \rho_{\theta,++}U_{\theta}^{\dag} +U_{\theta} \rho_{\theta,++} \partial_l V_{\theta}^{\dag} V_{\theta}U_{\theta}^{\dag}.
\end{align*}
Finally, to write the RHS of the previous equality as 
$$
 \partial_l U_{\theta}\rho_{\theta,++} U_{\theta}^{\dag} + U_{\theta}\partial_l \rho_{\theta,++}U_{\theta}^{\dag} + U_{\theta} \rho_{,\theta,++}\partial_l U_{\theta}^{\dag}=\partial_l \varrho_{\theta,++},$$
we must have that
\begin{align}
U_{\theta}^{\dag}(\partial_l U_{\theta} - U_{\theta} V_{\theta}^{\dag} \partial_l V_{\theta})\rho_{\theta,++}   + \rho_{\theta,++} (\partial_l U_{\theta} - U_{\theta} V_{\theta}^{\dag} \partial_l V_{\theta})^{\dag} U_{\theta} &= 0. \label{eq:U-PDE}
\end{align}
Therefore, the commutativity $[L_{\theta_l,++},L_{\theta_m,++}]=U_{\theta}^{\dag} [\widetilde{L}_{\theta_l,++},\widetilde{L}_{\theta_m,++}] U_{\theta}=0$ for $l,m=1,\ldots,p$ and the existence of a unitary $U_{\theta}$ satisfying \eqref{eq:U-PDE} for $l=1,\ldots,p$ are {\em necessary and sufficient}. From \eqref{eq:commutativity-block} it follows that $L_{\theta_l,+0}L_{\theta_m,+0}^{\dag} - L_{\theta_m,+0}L_{\theta_l,+0}^{\dag}=0$ for all $l,m=1,\ldots,p$, thus Condition 3 of the theorem is also necessary.

Observe that by Lemma \ref{lem:regular-null-decom}, for a null POVM operator $E_k$ the condition \eqref{eq:null-cond-gen} reduces to
\begin{equation}
E_{k,00}(L_{\theta_l,+0}^{\dag} -c_{lm}^k L_{\theta_m,+0}^{\dag})=0\quad \forall l,m=1,\ldots,p. \label{eq:null-cond-red}
\end{equation}
Let $W \in \mathbb{C}^{r_0\times r_0}$ be a unitary matrix such that the conditions on the corresponding columns of $L_{\theta_l,+0}W$ and $L_{\theta_m,+0}W$ for $l,m=1,\ldots,p$ as stated in Condition 4 of the theorem are satisfied. Consider $E_{k,00}=WD_{k,00}W^{\dag}$ for some diagonal $D_{k,00}$ with $D_{k,00} \geq 0$. Then \eqref{eq:null-cond-red} is equivalent to  
\begin{equation}
D_{k,00} ((L_{\theta_l,+0}W)^{\dag} - c^k_{lm}(L_{\theta_l,+0} W)^{\dag}) =0, \label{eq:null-cond-aux}
\end{equation}
for all $l,m=1,\ldots,p$. By setting $D_{k,00}$ as specified in the theorem statement and  $c_{lm}^k=\lambda_{lmk}$, with $\lambda_{lmk}$ as defined in Condition 4, $\{E_{k,00}\}_{k=1,\ldots,r_0}$ forms a set of projective and mutually commuting operators with $\sum_{k=1}^{r_0}  E_{k,00} = I_{r_0}$. Hence the conditions are sufficient for \eqref{eq:null-cond-red}.

Now, we consider Condition 1. It will be shown that under Condition 1, a regular POVM operator corresponding to a projective measurement can be constructed that satisfies  \eqref{eq:regular-cond-gen}. By Lemma \ref{lem:regular-null-decom}, consider a regular POVM operator $E_{k}$,
$$
E_{k} = \left[ \begin{array}{cc} E_{k,++} & 0 \\ 0 & 0 \end{array} \right],
$$
with $E_{k,++} \geq 0$ and $E_{k,++} \neq 0$. Using this block decomposition, \eqref{eq:regular-cond-gen} reduces to
\begin{equation}
E_{k,++} L_{\theta_l,++} = c^k_j E_{k,++} \quad \forall l=1,\ldots,p. \label{eq:reg-++-cond}
\end{equation}
Since $[L_{\theta_l,++},L_{\theta_m,++}]=0$ by Condition 1, there is a common set of projection operators such that $L_{\theta_l,++}$ has the spectral decomposition
$
L_{\theta_l,++} =\sum_{k=1}^{\chi_{\theta}} \lambda_{lk} \Pi_{\theta,k},\;l=1,\ldots,p
$.
where $\chi_{\theta} \leq r_+$ and $\Pi_{\theta,k}$ are mutually commuting projection operators. $[\Pi_{\theta,k},\Pi_{\theta,j}]=0$ and $\Pi_{\theta,j}^2 = \Pi_{\theta,j}$ for all $j,k=1,\ldots,\chi_{\theta}$ such that $\sum_{k=1}^{\chi_{\theta}} \Pi_{\theta,k}= I_{r_+}$. 
By setting $E_{k,++}=\Pi_{\theta,k}$ for $k=1,\ldots,\chi_{\theta}$, we have that \eqref{eq:reg-++-cond} is satisfied with  $c^k_j = \lambda_{jk}$, $E_{k,++} \geq 0$ and $E_{k,++} \neq 0$ as required. By construction, the operator $E_k$ satisfies \eqref{eq:regular-cond-gen} and is a projection operator. Moreover, by construction, $\sum_{k=1}^{\chi_{\theta}} E_{k,++} = I_{r_+}$. Finally, let $E_k$ for $k=\chi_{\theta}+1,\ldots,\chi_{\theta}+r_0$ be the null POVM operators constructed in the previous paragraph. It follows that $\sum_{k=1}^{M} E_{k} = I_{n_s}$ for $M=\chi_{\theta}+r_0$ as required for a POVM. This completes the proof. 
\end{proof}

The conditions of the theorem are quite stringent. Condition 1 requires that $L_{\theta_l,++}$ and $L_{\theta_m,++}$ commute on the support subspace $\mathcal{H}_{+,\theta}$ for all $l,m=1,\ldots,p$. Condition 2 involves a system of coupled nonlinear partial differential equations that are required to have a unitary solution $U_{\theta}$.  Condition 3 is also necessary and is imposed by partial commutativity. When $r_+=1$, Conditions 1 and 3 in Theorem \ref{thm:main} are necessary and sufficient. In this case Condition 1 holds trivially and, by \eqref{eq:commutativity-block}, the necessary and sufficient average commutativity condition for pure states \cite{Matsumoto02} holds if and only if Condition 3 also holds. The next example illustrates a multiparameter quantum state that satisfies the conditions of the theorem.

The following example illustrates a multiparameter quantum state that satisfies the conditions of the theorem for all $\theta$ in its specified parameter set $\Theta$.  
\begin{example}\label{ex:saturable}
Consider the quantum state $\rho_{\theta}$ on $\mathcal{H}=\mathbb{C}^3$ (a three-level system or qutrit) parameterized by the vector $\theta=(\theta_1,\theta_2)$ in the parameter set $\Theta = (0,1)\times (0,1)$ given by:
$$
\rho_{\theta} = \left[\begin{array}{ccc} |d|^2(1-\theta_1) & 0 & (1-\theta_1)d\sqrt{1-|d|^2}e^{i\phi(\theta)} \\ 0 & \theta_1 & 0 \\(1-\theta_1)\bar{d}\sqrt{1-|d|^2}e^{-i\phi(\theta)} & 0 & (1-\theta_1)(1-|d|^2) \end{array}\right],  
$$
where $d$ is a complex number satisfying $0 <|d|<1$ and $\phi(\theta) = c_1 \theta_1 + c_2 \theta_2$ for some real non-zero constants $c_1$ and $c_2$.
The state is a mixture of two pure states and satisfies $\mathrm{rank}(\rho_{\theta})=2$ for all $\theta \in \Theta$. The projection operator to the null space can be computed explicitly to be
$$
P_{0,\theta} = \left[\begin{array}{ccc} 1-|d|^2  & 0 & -d\sqrt{1-|d|^2} e^{i\phi(\theta)} \\ 0 & 0 &  0 \\ -\bar{d} \sqrt{1-|d|^2}e^{-i\phi(\theta)} & 0 & |d|^2\end{array}\right],
$$
and so
$$
P_{+,\theta} = I-P_{0,\theta} =\left[\begin{array}{ccc} |d|^2  & 0 & d\sqrt{1-|d|^2} e^{i\phi(\theta)} \\ 0 & 1 &  0 \\ \bar{d} \sqrt{1-|d|^2}e^{-i\phi(\theta)} & 0 & 1-|d|^2\end{array}\right].
$$
We also have that
\begin{align*}
\lefteqn{\frac{\partial \rho_{\theta}}{\partial \theta_1}}\\
&=\left[\begin{array}{ccc} -|d|^2 & 0 & d\sqrt{1-|d|^2}(-1+ic_1(1-\theta_1))e^{i\phi(\theta)}  \\ 0 & 1 &  0 \\ \bar{d}\sqrt{1-|d|^2}(-1-ic_1(1-\theta_1))e^{-i\phi(\theta)} & 0 & -(1-|d|^2) \end{array}\right]
\end{align*}
and 
$$
\frac{\partial \rho_{\theta}}{\partial \theta_2}=\left[\begin{array}{ccc} 0 & 0 & ic_2(1-\theta_1)d\sqrt{1-|d|^2}e^{i\phi(\theta)}\\ 0 & 0 &  0 \\ -ic_2(1-\theta_1)d\sqrt{1-|d|^2}e^{-i\phi(\theta)} & 0 & 0 \end{array}\right].
$$

With some lengthy and tedious calculations it may be verified that
$$
P_{+,\theta} \frac{\partial \rho_{\theta}}{\partial \theta_1} P_{0,\theta} = \left(\frac{c_2}{c_1} \right)P_{+,\theta} \frac{\partial \rho_{\theta}}{\partial \theta_2} P_{0,\theta} \neq 0
$$
and 
$$
P_{+,\theta} \frac{\partial \rho_{\theta}}{\partial \theta_2} P_{+,\theta} = 0_{3 \times 3}.
$$
Using \eqref{eq:SLD-cond} it follows from the above identities (since $\rho_{\theta,++}>0$) that $L_{\theta_1,+0} = (c_2/c_1)L_{\theta_2,+0} \neq 0$ and $L_{\theta_2,++} = 0$, respectively. This is enough to verify Conditions 1 and 4 (the latter with $W=1$) of Theorem \ref{thm:main} (hence also Condition 3 by Remark \ref{rem:cond-redundancy}) for all $\theta \in \Theta$. Therefore, the QCRB can be saturated for this quantum state by the projective POVM specified in the theorem. This concludes the example. 
\end{example}

\section{Conclusion}
\label{sec:conclu} This paper has established, for finite-dimensional density operators, necessary and sufficient conditions for saturability of the QCRB in multiparameter single-copy 
estimation. In addition, new necessary conditions that imply partial commutativity and that become sufficient with the addition of another condition have also been obtained. The paper explicitly characterizes the measurement that saturates the QCRB when the sufficient conditions are satisfied, which turns out can always be chosen to be projective. As such the results make a significant advance towards understanding conditions for saturating the QCRB in the single-copy setting and, in particular, resolves the open problem of necessary and sufficient conditions \cite{HRZ22}.

The results have focused on quantum systems with a finite-dimensional Hilbert space in order to extract the essential ideas needed to address the problem. However, it is reasonable to anticipate that the methodology employed here can be suitably adapted to infinite-dimensional quantum systems such as continuous-variable quantum systems, in particular quantum Gaussian systems. They will be the subject of future research continuing from this one. 

\appendix

\section{Corrigendum to ``Saturability of the Quantum Cram\'{e}r-Rao Bound in Multiparameter Quantum Estimation at the Single-Copy Level"}
\label{app:corrigendum}
\begin{abstract}
A subtlety is found in the sufficiency of Condition 2 in Theorem 2 of the main paper since the associated system of PDEs can have non-unique unitary solutions and there can exist a unitary solution that does not correspond to saturation of the QCRB by the null POVM operators. 
This corrigendum fixes this gap and restores the necessary and sufficient conditions by replacing the original system of PDEs with a related stronger system of PDEs under a certain constraint. Condition 2 in Theorem 2 of the paper is corrected. Examples are given to illustrate instances where the corrected conditions are fulfilled. 
\end{abstract}

The corrigendum \cite{Nurd24c} uses the notation in the main paper. Theorem \ref{thm:main} states the following necessary and sufficient conditions:
\begin{enumerate}
\item[1)] $[L_{\theta_l,++},L_{\theta_l,++}]=0$ for $l,m=1,\ldots,p$. 

\item[2)] For each $\theta$ there exists a unitary $U_{\theta} \in \mathbb{C}^{r_+ \times r_+}$ such that $U_{\theta}^{\dag}(\partial_l U_{\theta} - U_{\theta} V_{\theta}^{\dag} \partial_l V_{\theta})\rho_{\theta,++} + \rho_{\theta,++} (\partial_l U_{\theta} - U_{\theta} V_{\theta}^{\dag} \partial_l V_{\theta})^{\dag} U_{\theta}=0$ for  $l=1,\ldots,p$, where $\partial_l = \partial/\partial \theta_l$, $V_{\theta} =[\begin{array}{ccc} |\psi_{1,\theta} \rangle & \ldots & |\psi_{r_+,\theta} \rangle \end{array}]$, and $\rho_{\theta,++}$ is represented in the basis $\mathcal{B}_{+,\theta}$.
\end{enumerate}
However, the author recently found a subtlety in the  sufficiency of Condition 2. This is because there can exist a unitary solution to this condition that does not necessarily correspond to saturation of the QCRB. For example, when $r_+=1$ or when $\rho_{\theta,++}=(1/r_+)I_{r_+}$ then $U_{\theta}=I_{r_+}$ satisfies Condition 2 (since $V_{\theta}$ is an isometry, $V_{\theta}^{\dag}V_{\theta}=I_{r_+}$, and therefore $\partial_l V_{\theta}^{\dag}V_{\theta}$ is skew-hermitian) without imposing any constraints on $V_{\theta}$ as would be required. The conditions on $V_{\theta}$ relate to null POVM operators that saturate the QCRB (discussed further below). In the following the correct necessary and sufficient conditions will be recovered by modifying Condition 2.

Let $Y_{\theta} =\left[\begin{array}{ccc} |\phi_{1,\theta}\rangle & \ldots & |\phi_{r_0,\theta}\rangle \end{array} \right] \in \mathbb{C}^{n_s \times r_0}$ have columns that are the orthonormal basis of $\mathcal{H}_{0,\theta}$. Note that $V_{\theta}^{\dag}Y_{\theta}=0$ and $Y_{\theta}$ is also an isometry, $Y_{\theta}^{\dag}Y_{\theta}=I_{r_0}$. By following a similar calculation to the proof of Theorem \ref{thm:main}, it can be shown that  $\left. P_{+,\theta}\partial_l (V_{\theta} \rho_{\theta,++}  V_{\theta}^{\dag}) P_{0,\theta}\right|_{\mathcal{H}_{0,\theta}} = V_{\theta}(\rho_{\theta,++} \partial_l V_{\theta}^{\dag}Y_{\theta} )Y_{\theta}^{\dag}$. Therefore, in the basis $\mathcal{B}_{k,\theta}$ for $\mathcal{H}_{k,\theta}$ and $k \in \{+,0\}$, $L_{\theta_l,+0} = 2\partial_l V_{\theta}^{\dag}Y_{\theta}$ for $l=1,\ldots,p$  \cite{Nurd24b}. Optimal null POVM operators that saturate the QCRB depend only on $L_{\theta_l,+0}$ for $l=1,\ldots,p$ (by Lemma \ref{lem:optimal-POVMs} and Eq. \eqref{eq:null-cond-red}) and thus in turn depend only on $V_{\theta}$ since $\partial_l V_{\theta}$  and $Y_{\theta}$ are determined by $V_{\theta}$. 

Since the same set of null operators would saturate $\rho_{\theta}=V_{\theta}\rho_{\theta,++}V_{\theta}^{\dag}$ for a fixed $V_{\theta}$ and any choice of $\rho_{\theta,++}$, the equality involving the partial derivatives in Condition 2 must hold for all diagonal density operators $\rho_{\theta,++}$. By the observation that $X \in \mathbb{C}^{r_+ \times r_+}$ satisfies $X \rho_{\theta,++} +  \rho_{\theta,++}X^{\dag} =0$ for all $\rho_{\theta,++}$ if and only if $X=iD$ for some diagonal matrix $D \in \mathbb{R}^{r_+ \times r_+}$, this implies that there exist some, generally $\theta$-dependent, diagonal matrices $D_{l,\theta} \in \mathbb{R}^{r_+ \times r_+}$ for $l=1,\dots,p$ such that the system of PDEs: 
\begin{equation}
\partial_l U_{\theta} = U_{\theta} (V_{\theta}^{\dag} \partial_l V_{\theta}+iD_{l,\theta}),\;   l=1,\ldots,p, \label{eq:U-PDEs}
\end{equation}
has a unitary solution $U_{\theta}$. In general, the choice of $D_{1,\theta},\ldots,D_{p,\theta}$ may not be unique; see Example \ref{ex:L-CSS}. 
\newpage 
Now, consider the following modified condition: 
\begin{enumerate}
\item[2')] There exist diagonal matrices $D_{1,\theta},\ldots,D_{p,\theta} \in \mathbb{R}^{r_+ \times r_+}$ and a unitary $U_{\theta} \in \mathbb{C}^{r_+ \times r_+}$ solving the system of PDEs \eqref{eq:U-PDEs} such that $\widetilde{V}_{\theta} = V_{\theta}U_{\theta}^{\dag}$ satisfies \begin{eqnarray}
E_{k,00} Y_{\theta}^{\dag} (\partial_l \widetilde{V}_{\theta}-c^k_{lm} \partial_m \widetilde{V}_{\theta})=0,\, l,m=1,\ldots,p, \label{eq:null-saturation}
\end{eqnarray}
for some POVM $\{E_{k,00}\}_{k=1,\ldots,\nu_{\theta}}$ on $\mathcal{H}_{0,\theta}$ ($\sum_{k=1}^{\nu_{\theta}}E_{k,00}=I_{r_0}$) and real constants $c^k_{lm}$. 
\end{enumerate}
When \eqref{eq:U-PDEs} has a unitary solution $U_{\theta}$ then the original Condition 2 is fulfilled and, from the proof of Theorem \ref{thm:main}, $\varrho_{\theta,++}=U_{\theta} \rho_{\theta,++}U_{\theta}^{\dag}$ satisfies $(\widetilde{L}_{\theta_l,++}\varrho_{\theta,++} + \varrho_{\theta,++} \widetilde{L}_{\theta_l,++})/2 =\partial_l  \varrho_{\theta,++}$ for all $l$ and its associated QCRB is saturated if and only if Condition 1 holds. In this case the regular POVM operators can be chosen to be $V_{\theta}\Pi_{\theta,l}V_{\theta}^{\dag}$ for $l=1,\ldots,\chi_{\theta}$ with $\Pi_{\theta, l}$ and $\chi_{\theta}$ as given in Theorem \ref{thm:main}; as shown in main paper these regular POVM operators satisfy the relevant condition in Lemma \ref{lem:optimal-POVMs}. Since $\widetilde{L}_{\theta_l,+0} =U_{\theta} L_{\theta_l,+0} = \partial_l \widetilde{V}_{\theta}^{\dag} Y_{\theta}$ for $l=1,\ldots,p$ and $U_{\theta}$ is invertible, by Lemma \ref{lem:optimal-POVMs} and Eq. \eqref{eq:null-cond-red} for the null POVM operators to saturate the QCRB it is necessary and sufficient that \eqref{eq:null-saturation} holds for some real constants $c^k_{lm}$. Moreover, since the optimal (saturating) regular POVM operators satisfy $\sum_{k=1}^{\nu_{\theta}} V_{\theta} \Pi_{\theta,k} V_{\theta}^{\dag} = \left[\begin{array}{cc} I_{r_+} & 0 \\ 0 & 0 \end{array} \right]$ it follows necessarily that $\sum_{k=1}^{\nu_{\theta}} E_{k,00} = I_{r_0}$. 

For the case where $D_{l,\theta}=0$ for all $l$ the following holds.
\begin{lemma}\label{lem:D-vanishing} If $D_{l,\theta}=0$ for all $l$ then under Condition 2' $P_{+,\theta} \partial_l \widetilde{V}_{\theta} =0$ for all $l$. 
\end{lemma}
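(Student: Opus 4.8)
The plan is to use the explicit representation $P_{+,\theta} = V_{\theta} V_{\theta}^{\dag}$ of the projection onto $\mathcal{H}_{+,\theta}$ (valid because the columns of $V_{\theta}$ form an orthonormal basis of $\mathcal{H}_{+,\theta}$) and to show that, after left-multiplying by $V_{\theta}^{\dag}$, the two terms produced by differentiating $\widetilde{V}_{\theta} = V_{\theta} U_{\theta}^{\dag}$ cancel exactly. Since $P_{+,\theta} \partial_l \widetilde{V}_{\theta} = V_{\theta} (V_{\theta}^{\dag} \partial_l \widetilde{V}_{\theta})$, it suffices to establish $V_{\theta}^{\dag} \partial_l \widetilde{V}_{\theta} = 0$ for each $l$.

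First I would apply the product rule to obtain $\partial_l \widetilde{V}_{\theta} = (\partial_l V_{\theta}) U_{\theta}^{\dag} + V_{\theta} \, \partial_l U_{\theta}^{\dag}$. Under the hypothesis $D_{l,\theta} = 0$, the system of PDEs \eqref{eq:U-PDEs} collapses to $\partial_l U_{\theta} = U_{\theta} V_{\theta}^{\dag} \partial_l V_{\theta}$, and taking adjoints gives $\partial_l U_{\theta}^{\dag} = (\partial_l V_{\theta})^{\dag} V_{\theta} U_{\theta}^{\dag}$. Substituting this and multiplying on the left by $V_{\theta}^{\dag}$, then using the isometry relation $V_{\theta}^{\dag} V_{\theta} = I_{r_+}$, yields $V_{\theta}^{\dag} \partial_l \widetilde{V}_{\theta} = V_{\theta}^{\dag} (\partial_l V_{\theta}) U_{\theta}^{\dag} + (\partial_l V_{\theta})^{\dag} V_{\theta} U_{\theta}^{\dag}$.

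The final ingredient is the identity obtained by differentiating the isometry constraint $V_{\theta}^{\dag} V_{\theta} = I_{r_+}$, namely $(\partial_l V_{\theta})^{\dag} V_{\theta} = -V_{\theta}^{\dag} \partial_l V_{\theta}$ (equivalently, $V_{\theta}^{\dag} \partial_l V_{\theta}$ is skew-Hermitian). Inserting this into the previous expression makes the two terms cancel, so $V_{\theta}^{\dag} \partial_l \widetilde{V}_{\theta} = 0$ and hence $P_{+,\theta} \partial_l \widetilde{V}_{\theta} = V_{\theta} \cdot 0 = 0$, as claimed.

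The argument is essentially routine linear algebra, and I do not expect a genuine obstacle. The only point requiring care is the consistent bookkeeping of adjoints together with the skew-Hermitian identity $(\partial_l V_{\theta})^{\dag} V_{\theta} = -V_{\theta}^{\dag} \partial_l V_{\theta}$, which is precisely what produces the cancellation; everything else follows from the isometry property of $V_{\theta}$ and the reduced form of \eqref{eq:U-PDEs}.
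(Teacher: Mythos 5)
Your proof is correct. Every step checks out: with $D_{l,\theta}=0$ the PDE system \eqref{eq:U-PDEs} reduces to $\partial_l U_{\theta} = U_{\theta} V_{\theta}^{\dag}\partial_l V_{\theta}$, its adjoint gives $\partial_l U_{\theta}^{\dag} = (\partial_l V_{\theta})^{\dag}V_{\theta}U_{\theta}^{\dag}$, and combining this with the product rule, the isometry relation $V_{\theta}^{\dag}V_{\theta}=I_{r_+}$, and the skew-Hermiticity of $V_{\theta}^{\dag}\partial_l V_{\theta}$ indeed yields $V_{\theta}^{\dag}\partial_l \widetilde{V}_{\theta}=0$, hence $P_{+,\theta}\partial_l\widetilde{V}_{\theta}=V_{\theta}(V_{\theta}^{\dag}\partial_l\widetilde{V}_{\theta})=0$.

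Your route differs from the paper's in its bookkeeping, and is arguably cleaner. The paper never isolates $V_{\theta}^{\dag}\partial_l\widetilde{V}_{\theta}$; instead it expands $\partial_l V_{\theta}$ against the resolution of identity $I = V_{\theta}V_{\theta}^{\dag} + Y_{\theta}Y_{\theta}^{\dag}$, uses the PDE in the form $V_{\theta}^{\dag}\partial_l V_{\theta} = U_{\theta}^{\dag}\partial_l U_{\theta}$ to rewrite the support component, and after cancelling against the product-rule expansion of $\partial_l(\widetilde{V}_{\theta}U_{\theta})$ and invoking $Y_{\theta}^{\dag}V_{\theta}=0$ arrives at $\partial_l\widetilde{V}_{\theta} = Y_{\theta}Y_{\theta}^{\dag}\partial_l\widetilde{V}_{\theta}$, which is the same conclusion phrased as ``$\partial_l\widetilde{V}_{\theta}$ lies entirely in $\mathcal{H}_{0,\theta}$.'' Your argument avoids the null-space isometry $Y_{\theta}$ altogether and works purely inside the support, reducing the claim to a single cancellation; this is more economical. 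What the paper's phrasing buys in exchange is that the identity $\partial_l\widetilde{V}_{\theta} = P_{0,\theta}\partial_l\widetilde{V}_{\theta}$ displays explicitly where the derivative does live, which is the form used when connecting to the null POVM saturation condition \eqref{eq:null-saturation} via $\widetilde{L}_{\theta_l,+0} = \partial_l\widetilde{V}_{\theta}^{\dag}Y_{\theta}$. Both proofs consume exactly the same hypotheses (the reduced PDE and unitarity/isometry), so the difference is one of presentation rather than of substance.
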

\begin{proof}
Since $D_{l,\theta}=0$ for all $l$, under Condition 2' $V_{\theta}^{\dag} \partial_l V_{\theta} = U_{\theta}^{\dag} \partial_l U_{\theta}$, $l=1,\ldots,p$, and
\begin{align*}
\partial_l V_{\theta} &= V_{\theta}V_{\theta}^{\dag} \partial_l V_{\theta} +  Y_{\theta}Y_{\theta}^{\dag} \partial_l V_{\theta} \\  
&=   \tilde{V}_{\theta}^{\dag} \partial_l U_{\theta} +  Y_{\theta}Y_{\theta}^{\dag} (\partial_l \tilde{V}_{\theta}-V_{\theta} \partial_l U_{\theta}^{\dag})U_{\theta}. 
\end{align*}
Since $V_{\theta} = \widetilde{V}_{\theta}U_{\theta}$ and $U_{\theta}$ is invertible, it follows from the above that $\partial_l \widetilde{V}_{\theta} = Y_{\theta}Y_{\theta}^{\dag}(\partial_l \widetilde{V}_{\theta}-V_{\theta} \partial_l U_{\theta}^{\dag})$. Since $Y_{\theta}^{\dag} V_{\theta}=0$,  $\partial_l \widetilde{V}_{\theta} = Y_{\theta}Y_{\theta}^{\dag} \partial_l \widetilde{V}_{\theta} \Leftrightarrow (I- Y_{\theta}Y_{\theta}^{\dag}) \partial_l \widetilde{V}_{\theta} =0$. That is, $P_{+,\theta} \partial_l \widetilde{V}_{\theta} =0$ for all $l$. 
\end{proof}

Finally, since the matrix $W$ in Condition 4 of Theorem \ref{thm:main}) can more generally depend on $\theta$ (without any change to the proof), the theorem can be corrected as the following. 
\begin{theorem} \label{thm:main-corrected}(Correction to Theorem \ref{thm:main}) 
Consider Condition 1, Condition 2' and the following two conditions:
\begin{enumerate}
\item[3)] $L_{\theta_l,+0} L_{\theta_m,+0}^{\dag}-L_{\theta_m,+0} L_{\theta_l,+0}^{\dag}=0$ for all $l,m=1,\ldots,p$.
\item[4)] There exists a unitary $W_{\theta} \in \mathbb{C}^{r_0\times r_0}$ such that  all corresponding columns of  $L_{\theta_l,+0}W_{\theta}$ and $L_{\theta_m,+0}W_{\theta}$ with $l,m=1,\ldots,p$ are real scalar constant multiples of one another or the corresponding columns are simultaneously vanishing. That is, the $s$-th column of $L_{\theta_l,+0}W_{\theta}$ is either $\lambda_{lms} \in \mathbb{R}$ times the $s$-th column of $L_{\theta_m,+0}W_{\theta}$, or both columns are zero (i.e., all their elements are zero),  for all columns $s$ and $\forall l,m$.
\end{enumerate}
Then the following statements hold for saturability of the multiparameter QCRB in the single-copy setting: i) Conditions 1 and 2' are necessary and sufficient, ii) Condition 3 is necessary and together with Condition 1 imply partial commutativity, and iii) Conditions 1 and 4 are sufficient.

When the sufficient conditions (Conditions 1 and 4) are satisfied, there exists an optimal projective measurement given by the POVM:
\begin{align*}
&\underbrace{\biggl\{ \left[\begin{array}{cc} \Pi_{\theta,1} & 0 \\ 0 & 0 \end{array} \right],\ldots, \left[\begin{array}{cc} \Pi_{\theta,\chi_{\theta}} & 0 \\ 0 & 0 \end{array} \right]\biggr\}}_{\hbox{Regular POVM operators}} \\
&\quad\bigcup \underbrace{\biggl\{ \left[\begin{array}{cc} 0 & 0 \\ 0 & W_{\theta} D_{00,1} W_{\theta}^{\dag} \end{array} \right],\ldots,\left[\begin{array}{cc} 0 & 0 \\ 0 & W_{\theta} D_{00,r_{0}} W_{\theta}^{\dag} \end{array} \right]\biggr\}}_{\hbox{Null POVM operators}},
\end{align*}
where $\Pi_{\theta_j}$ for $j=1,\ldots,\chi_{\theta}$ ($\chi_{\theta} \leq r_{+}$) are the (common) projection operators in the spectral decomposition of $L_{\theta_l,++}$ for $l=1,\ldots,p$, and $D_{00,j}$ is the $r_0\times r_0$ projection operator which is zero everywhere except for a 1 in row $j$ and column $j$ for $j=1,\ldots,r_0$.
\end{theorem}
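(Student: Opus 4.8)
The plan is to retrace the architecture of the proof of Theorem~\ref{thm:main}, keeping the block decomposition \eqref{eq:decom} as the organizing device, and to insert the strengthened PDE system \eqref{eq:U-PDEs} together with the null-saturation clause \eqref{eq:null-saturation} exactly at the point where the original argument was too weak. As before I would split any candidate saturating POVM into its regular and null operators via Lemma~\ref{lem:regular-null-decom}, and exploit the fact that \eqref{eq:commutativity-block} cleanly separates partial commutativity into a support piece $[L_{\theta_l,++},L_{\theta_m,++}]$ governed by Condition~1 and an off-support piece $L_{\theta_l,+0}L_{\theta_m,+0}^{\dag}-L_{\theta_m,+0}L_{\theta_l,+0}^{\dag}$ governed by Condition~3. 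The regular operators reduce (by Lemma~\ref{lem:optimal-POVMs}, Eq.~\eqref{eq:regular-cond-gen}) to a problem about $L_{\theta_l,++}$ and $\rho_{\theta,++}$ only, while the null operators reduce (by Lemma~\ref{lem:optimal-POVMs}, Eq.~\eqref{eq:null-cond-red}) to a problem about $L_{\theta_l,+0}$, hence about $V_{\theta}$ alone; this decoupling is what lets the two halves of Condition~2' be treated independently.

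For statement~(i), necessity would run as follows. From the regular operators, the restricted POVM $\{E_{k,++}\}$ on $\mathcal{H}_{+,\theta}$ saturates the QCRB for the full-rank $\rho_{\theta,++}$, so by the change of basis $\widetilde{V}_{\theta}=V_{\theta}U_{\theta}^{\dag}$ of the proof of Theorem~\ref{thm:main} there is a unitary $U_{\theta}$ making the transformed SLDs commute, i.e.\ Condition~1, and satisfying \eqref{eq:U-PDE}. The crucial upgrade is that the null operators are functions of $V_{\theta}$ alone and so saturate $\rho_{\theta}=V_{\theta}\rho_{\theta,++}V_{\theta}^{\dag}$ for \emph{every} diagonal $\rho_{\theta,++}$; requiring \eqref{eq:U-PDE} to hold across this whole family and applying the observation that $X\rho+\rho X^{\dag}=0$ for all diagonal $\rho$ forces $X=iD$ with $D$ real diagonal, to $X=U_{\theta}^{\dag}(\partial_l U_{\theta}-U_{\theta}V_{\theta}^{\dag}\partial_l V_{\theta})$, promotes \eqref{eq:U-PDE} to the stronger system \eqref{eq:U-PDEs}. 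The clause \eqref{eq:null-saturation} is then just the translation of \eqref{eq:null-cond-red} through $\widetilde{L}_{\theta_l,+0}=U_{\theta}L_{\theta_l,+0}=\partial_l\widetilde{V}_{\theta}^{\dag}Y_{\theta}$. Sufficiency reverses this: a unitary solution of \eqref{eq:U-PDEs} implies \eqref{eq:U-PDE}, since substituting $\partial_l U_{\theta}$ makes the bracket proportional to $[D_{l,\theta},\rho_{\theta,++}]=0$, so $\varrho_{\theta,++}=U_{\theta}\rho_{\theta,++}U_{\theta}^{\dag}$ obeys the SLD relation; Condition~1 then supplies a joint spectral decomposition whose projectors give regular operators $V_{\theta}\Pi_{\theta,k}V_{\theta}^{\dag}$ satisfying \eqref{eq:regular-cond-gen}, while \eqref{eq:null-saturation} gives null operators satisfying \eqref{eq:null-cond-red}, and the two families sum to the identity, so by Lemma~\ref{lem:optimal-POVMs} and Theorem~\ref{thm:POVM-characterization} the QCRB is saturated.

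Statement~(ii) is essentially immediate from \eqref{eq:commutativity-block}: Condition~3 is exactly the vanishing of the off-support bracket, so partial commutativity (which by the corollary following Theorem~\ref{thm:POVM-characterization} is necessary for saturation) together with the already-necessary Condition~1 forces Condition~3; conversely Conditions~1 and~3 make both brackets in \eqref{eq:commutativity-block} vanish, which is partial commutativity \eqref{eq:partial-commutativity}. Statement~(iii) I would prove by the explicit construction of Theorem~\ref{thm:main}, now allowing the diagonalizing unitary to be $\theta$-dependent: Condition~1 yields the common projectors $\Pi_{\theta,k}$ in the spectral decomposition of the $L_{\theta_l,++}$, giving regular projective operators satisfying \eqref{eq:reg-++-cond}; Condition~4 supplies $W_{\theta}$ arranging the columns of the $L_{\theta_l,+0}W_{\theta}$ to be pairwise real-proportional or jointly zero, so that $E_{k,00}=W_{\theta}D_{00,k}W_{\theta}^{\dag}$ satisfies \eqref{eq:null-cond-aux} hence \eqref{eq:null-cond-red}; one checks these are mutually commuting projections summing to $I_{n_s}$, and that Condition~4 implies Condition~3 as in Remark~\ref{rem:cond-redundancy}. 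Replacing $W$ by $W_{\theta}$ changes nothing in this algebraic verification, since everything is pointwise in $\theta$.

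The main obstacle is the necessity portion of~(i): the whole point of the correction is that original Condition~2 admits spurious solutions (e.g.\ $U_{\theta}=I$ when $\rho_{\theta,++}=(1/r_+)I_{r_+}$) that leave $V_{\theta}$ unconstrained and so need not saturate the null part. I would therefore take particular care to justify the ``for all $\rho_{\theta,++}$'' step, arguing that because the null operators are determined by $V_{\theta}$ alone a genuinely saturating scheme must be compatible with the regular requirement \eqref{eq:U-PDE} simultaneously across the whole family of states sharing that $V_{\theta}$, and to confirm that a single $\widetilde{V}_{\theta}=V_{\theta}U_{\theta}^{\dag}$ can discharge both the commutativity (Condition~1) and the null-saturation \eqref{eq:null-saturation} requirements at once. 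I would also flag the possible non-uniqueness of the $D_{l,\theta}$ as a point where the argument must select a choice for which \eqref{eq:null-saturation} is feasible, using Lemma~\ref{lem:D-vanishing} to dispose of the degenerate case $D_{l,\theta}=0$.
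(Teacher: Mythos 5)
Your proposal is correct and takes essentially the same route as the paper's corrigendum argument: the same regular/null splitting via Lemmas \ref{lem:regular-null-decom} and \ref{lem:optimal-POVMs}, the same key observation that the null operators depend on $V_{\theta}$ alone so that the regular-part identity \eqref{eq:U-PDE} must hold for every diagonal $\rho_{\theta,++}$ (forcing the stronger system \eqref{eq:U-PDEs} via the $X\rho+\rho X^{\dag}=0 \Leftrightarrow X=iD$ observation), and the same translation of \eqref{eq:null-cond-red} into \eqref{eq:null-saturation} through $\widetilde{L}_{\theta_l,+0}=U_{\theta}L_{\theta_l,+0}$. Your treatment of statements (ii) and (iii), reusing the Theorem \ref{thm:main} construction pointwise in $\theta$ with $W_{\theta}$ in place of $W$, likewise matches the paper.
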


We conclude by exhibiting some examples satisfying Conditions 1 and 2' so that the QCRB is saturated. Note that in all of these examples $D_{l,\theta}$ can be chosen to be  $D_{l,\theta}=0$ for all $l=1,\ldots,p$. 

\begin{example}(\cite[Example 1]{Nurd24b}) Suppose that $\mathcal{H}_{+,\theta}=\mathcal{H}_{+}$ does not depend on $\theta$ and $V_{\theta} =B_{+} S_{\theta}$ where the columns of $B_{+} \in \mathbb{C}^{n_s \times r_+}$ are a fixed ($\theta$-independent) orthonormal basis of $\mathcal{H}_{+}$ and $S_{\theta}$ is some unitary-valued function such that $\partial_l S_{\theta}^{\dag}S_{\theta}$ is diagonal, $l=1,\ldots,p$. Then Conditions 1 and 2' are satisfied with  $U_{\theta}=S_{\theta}$ and for an arbitrary POVM $\{E_{k,00}\}_{k=1,\ldots,\nu_{\theta}}$ since $L_{\theta_l,0+}$ vanishes for all $l$.
\end{example}

\begin{example} \label{ex:L-CSS} Consider again Example \ref{ex:saturable}. For this example,
$V_{\theta}=\left[\begin{array}{cc} 0 & d e^{i (c_1\theta_1+c_2\theta_2)} \\ 1 & 0 \\ 0 & \sqrt{1-|d|^2} \end{array}\right]$, \\ $Y_{\theta} = [\begin{array}{ccc} \sqrt{1-|d|^2} & 0 & \bar{d}e^{-i(c_1\theta_1+c_2\theta_2)} \end{array}]^{\top}$, $U_{\theta}=\left[\begin{array}{cc} 1 & 0 \\ 0 & e^{i|d|^2 (c_1\theta_1+c_2\theta_2)}\end{array} \right]$ is a unitary solution of \eqref{eq:U-PDEs} (computed in \cite[Example 2]{Nurd24b}), and $\widetilde{V}_{\theta} = V_{\theta}U_{\theta}^{\dag} =\left[\begin{array}{cc} 0 & d e^{i (1-|d|^2)(c_1\theta_1+c_2\theta_2)} \\ 1 & 0 \\ 0 & \sqrt{1-|d|^2} e^{-i|d|^2(c_1\theta_1+c_2\theta_2)} \end{array}\right]$. Therefore, $\partial_l \widetilde{V}_{\theta} = i c_l f(\theta) \left[\begin{array}{cc} 0 & \sqrt{1-|d|^2} e^{i (c_1\theta_1+c_2\theta_2)} \\ 0 & 0 \\ 0 & -\bar{d}\end{array}\right]$ for $l=1,2$, where $f(\theta)= d \sqrt{1-|d|^2}e^{-i|d|^2 (c_1\theta_1+c_2\theta_2)}$.  Since $D_{l,\theta}=0$ for $l=1,2$, it can be directly verified that $P_{+,\theta} \partial_l \widetilde{V}_{\theta} = V_{\theta}V_{\theta}^{\dag} \partial_l \widetilde{V}_{\theta}=0$ for $l=1,2$ (Lemma \ref{lem:D-vanishing}). It has been shown in Example \ref{ex:saturable} that the QCRB is saturated for this example  thus Conditions 1 and 2' are fulfilled.

There is an alternative choice of $D_{l,\theta}$ for this example that is not necessarily vanishing. Since $\partial_l V_{\theta}^{\dag} V_{\theta}$ is diagonal and skew-hermitian, one can set $D_{l,\theta}=i \partial_l V_{\theta}^{\dag} V_{\theta}$ for all $l$. For this choice, $U_{\theta}=I_{r_+}$, $\widetilde{V}_{\theta}=V_{\theta}$ and Condition 2' becomes the condition \eqref{eq:null-cond-red} for $L_{\theta_l,+0}$, $l=1,2$. 
\end{example}

\begin{example} Suppose that $V_{\theta}$ is such that $\partial_l V_{\theta}^{\dag}V_{\theta}=0$, $l=1,\ldots,p$. Then $U_{\theta}=I_{r_+}$ is a unitary solution to \eqref{eq:U-PDEs} and $\widetilde{V}_{\theta}=V_{\theta}$. Condition 1 is satisfied since $L_{\theta_l,++}$ is diagonal in the basis $\mathcal{B}_{+,\theta}$ for all $l=1,\ldots,p$. If $\widetilde{V}_{\theta}$ satisfies \eqref{eq:null-saturation} then Condition 2' is satisfied.
\end{example}

\bibliographystyle{ieeetran}
\bibliography{mpqe}

% Generated by IEEEtran.bst, version: 1.13 (2008/09/30)
\begin{thebibliography}{10}
\providecommand{\url}[1]{#1}
\csname url@samestyle\endcsname
\providecommand{\newblock}{\relax}
\providecommand{\bibinfo}[2]{#2}
\providecommand{\BIBentrySTDinterwordspacing}{\spaceskip=0pt\relax}
\providecommand{\BIBentryALTinterwordstretchfactor}{4}
\providecommand{\BIBentryALTinterwordspacing}{\spaceskip=\fontdimen2\font plus
\BIBentryALTinterwordstretchfactor\fontdimen3\font minus
  \fontdimen4\font\relax}
\providecommand{\BIBforeignlanguage}[2]{{%
\expandafter\ifx\csname l@#1\endcsname\relax
\typeout{** WARNING: IEEEtran.bst: No hyphenation pattern has been}%
\typeout{** loaded for the language `#1'. Using the pattern for}%
\typeout{** the default language instead.}%
\else
\language=\csname l@#1\endcsname
\fi
#2}}
\providecommand{\BIBdecl}{\relax}
\BIBdecl

\bibitem{fan2008nonlinear}
J.~Fan and Q.~Yao, \emph{Nonlinear Time Series: Nonparametric and Parametric
  Methods}.\hskip 1em plus 0.5em minus 0.4em\relax Springer Science \& Business
  Media, 2008.

\bibitem{Ljung99}
L.~Ljung, \emph{System Identification: Theory for the User}, 2nd~ed.\hskip 1em
  plus 0.5em minus 0.4em\relax Prentice-Hall, 1999.

\bibitem{Liu20}
J.~Liu, H.~Yuan, X.~Lu, and X.~Wang, ``Quantum {F}isher information matrix and
  multiparameter estimation,'' \emph{J. Phys. A: Math. Theor.}, vol.~53, p.
  023001, 2020.

\bibitem{LZCWH22}
J.~Liu, M.~Zhang, H.~Chen, L.~Wang, and H.~Yuan, ``Optimal scheme for quantum
  metrology,'' \emph{Adv. Quantum Technol.}, vol.~5, p. 2100080, 2022.

\bibitem{Barbieri22}
M.~Barbieri, ``Optical quantum metrology,'' \emph{Phys. Rev. X Quantum},
  vol.~3, p. 023001, 2022.

\bibitem{NG22}
H.~I. Nurdin and M.~Gu\c{t}\v{a}, ``Parameter estimation and system
  identification for continuously-observed quantum systems,'' \emph{Ann. Revs.
  Control}, vol.~54, pp. 295--304, 2022.

\bibitem{SBD16}
M.~Szczykulska, T.~Baumgratz, and A.~Datta, ``Multi-parameter quantum
  metrology,'' \emph{Adv. Phys:X}, vol.~1, p. 621, 2016.

\bibitem{ABGG20}
F.~Albarelli, M.~Barbieri, M.~G. Genoni, and I.~Gianani, ``A perspective on
  multiparameter quantum metrology: From theoretical tools to applications in
  quantum imaging,'' \emph{Phys. Lett. A}, vol. 384, p. 126311, 2020.

\bibitem{DGG20}
R.~{Demkowicz-Dobra\'{n}ski}, W.~G\'{o}riecki, and M.~G\v{u}t\c{a},
  ``Multi-parameter estimation beyond quantum {F}isher information,'' \emph{J.
  Phys. A: Math. Theor.}, vol.~53, p. 363001, 2020.

\bibitem{HRZ22}
P.~Horodecki, L.~Rudnicki, and K.~\'{Z}yczkowski, ``Five open problems in
  quantum information theory,'' \emph{Phys. Rev. X Quantum}, vol.~3, p. 010101,
  2022.

\bibitem{RJD16}
S.~Ragy, M.~Jarzyna, and R.~{Demkowicz-Dobrza\'{n}ski}, ``Compatibility in
  multiparameter quantum metrology,'' \emph{Phys. Rev. A}, vol.~94, p. 052108,
  2016.

\bibitem{SYH20}
J.~Suzuki, Y.~Yang, and M.~Hayashi, ``Quantum state estimation with nuisance
  parameters,'' \emph{J. Phys. A: Math. Theor.}, vol.~53, p. 453001, 2020.

\bibitem{Yang19}
J.~Yang, S.~Pang, Y.~Zhou, and A.~N. Jordan, ``Optimal measurements for quantum
  multiparameter estimation with general states,'' \emph{Phys. Rev. A}, vol.
  100, p. 032104, 2019.

\bibitem{Pezze17}
L.~Pezz\'{e} \emph{et~al.}, ``Optimal measurements for simultaneous quantum
  estimation of multiple phases,'' \emph{Phys. Rev. Lett.}, vol. 119, p.
  130504, 2017.

\bibitem{Vidrighin14}
M.~D. Vidrigin \emph{et~al.}, ``Joint estimation of phase and phase diffusion
  for quantum metrology,'' \emph{Nat. Commun.}, vol.~5, p. 3532, 2014.

\bibitem{Matsumoto02}
K.~Matsumoto, ``A new approach to the {C}ram{\'e}r-{R}ao-type bound of the
  pure-state model,'' \emph{J. Phys. A: Math. Gen.}, vol.~35, p. 3111, 2002.

\bibitem{Nurd24c}
H.~I. Nurdin, ``Corrigendum to ``{S}aturability of the quantum
  {C}ram\'{e}r-{R}ao bound in multiparameter quantum estimation at the
  single-copy level",'' June 2024, submitted for publication.

\bibitem{Nurd24b}
------, ``Saturation of the multiparameter quantum {C}ram\'{e}r-{R}ao bound at
  the single-copy level with projective measurements,'' 2024, arXiv preprint
  arXiv:2405.01471 version 1, submitted for publication.

\end{thebibliography}

\end{document}